  \newtheorem{theorem}{Theorem}[section]
  \newtheorem{lemma}[theorem]{Lemma}
\definecolor{ao(english)}{rgb}{0.0, 0.5, 0.0}
\def\ra{((1 + sqrt(2)) / 6)}
\def\rb{(1 / 6)}
\def\rl{(0.5)}
\def\ral{((1 + sqrt(2)) * \rl / (2 + 4 * \rl))}
\def\rbl{\rl / (2 + 4 * \rl)}
\def\rao{((1 + sqrt(2)) / 9)}
\def\rbo{(1 / 9)}
\def\rdo{((25 - 6 * sqrt(2) + 2 * sqrt(134 - 82 * sqrt(2))) / 63)}
\def\reo{((26 + 24 * sqrt(2) - 2 * sqrt(310 + 214 * sqrt(2))) / 63)}
\def\rfo{((44 + 18 * sqrt(2) - 4 * sqrt(74 + 22 * sqrt(2))) / 63)}
\DeclareMathOperator{\divv}{div}
\title{Optimal Automated Market Makers: Differentiable Economics and Strong Duality}
\date{}
\author{Michael J. Curry \\ Harvard University 
\and
Zhou Fan \\ Harvard University 
\and
David C. Parkes \\ Harvard University}
\begin{document}

\maketitle

\begin{abstract}
The role of a market maker is to simultaneously offer to buy and sell quantities of goods, often a financial asset such as a share, at specified prices. An \textit{automated market maker} (AMM) is a mechanism that offers to trade according to some predetermined schedule; the best choice of this schedule depends on the market maker’s goals. The literature on the design of AMMs has mainly focused on prediction markets with the goal of information elicitation. More recent work motivated by DeFi has focused instead on the goal of profit maximization, but considering only a single type of good (traded with a numeraire), including under adverse selection (Milionis et al. 2022). Optimal market making in the presence of multiple goods, including the possibility of complex bundling behavior, is not well understood. In this paper, we show that finding an optimal market maker is dual to an optimal transport problem, with specific  geometric constraints on the transport plan in the dual. We show that optimal mechanisms for multiple goods and under adverse selection can take advantage of bundling, both improved prices for bundled purchases and sales as well as sometimes accepting payment ``in kind.’’ We present conjectures of optimal mechanisms in additional settings which show further complex behavior. From a methodological perspective, we make essential use of the tools of differentiable economics to generate conjectures of optimal mechanisms, and give a proof-of-concept for the use of such tools in guiding theoretical investigations.
\end{abstract}

\section{Introduction}

Market participants  like to buy low and sell high, and avoid doing the opposite, at least in expectation.
By doing this, a market participant can make a profit.
A typical {\em market maker}, which is an entity that plays a role in facilitating trades by other market participants,  achieves profits by charging a {\em spread}: they are willing to buy some quantity of a good at a fixed price, and sell the same quantity at a higher fixed price.
The goal of a market maker is to profit even in the face of potential {\em adverse selection}, which 
can arise when better informed traders  can selectively accept offers that are profitable to them (and therefore unprofitable to the market maker). 
Roughly speaking, the spread should be wider when there is greater
adverse selection.

At the same time, there is a much richer space of possibilities for market makers than simply adjusting 
the spread.
{\em Automated market makers (AMMs)}, as
used in prediction markets and decentralized finance (DeFi),
 frequently make use of these richer possibilities.
 For example, the {\em logarithmic-market scoring rule (LMSR)}~\cite{hanson2007logarithmic} and {\em constant-product market makers}~\cite{zhang2018formal} have the effect of offering a trader a
 continuum of trades at a continuum of prices.
From the perspective of {\em mechanism design},
 the use of  spread by a market maker is 
 similar to a {\em posted-price mechanism},
  and  posted-price mechanisms are well known to be 
 optimal for selling a single good to a single buyer.
Under reasonable models of adverse selection,  the same is true for market making: optimal mechanisms consist of a bid-ask spread, where the nature and extent of adverse selection determines the width of the spread~\cite{MilionisMyersonianFrameworkOptimal2023}. From the full space of mechanisms, the optimal mechanism
in this single good case
is therefore surprisingly simple.

But what if a market maker wants to make markets in multiple goods at once?
In this case, the full space of mechanisms might be extremely rich: in addition to offering a continuum of trades (as with the aforementioned, single-good AMMs), there could be bundling among goods and the potential to accept goods ``in kind'' in exchange for a lower payment. For example, a market maker might offer to buy good 1 for price $x$, or sell good 2 for price $y$, but simultaneously buy good 1 and sell good 2 charging  $z < y - x$.
To our knowledge, it is atypical for current market makers to bundle goods, 
and they instead simply adjust a separate spread per good (analogous to item-wise, posted prices).
 There are a few exceptions, though; e.g., in the case of certain order types on the {\em OneChronos} combinatorial stock exchange~\cite{OneChronosSite}, 
some experiments in {\em combinatorial prediction markets}~\cite{Kroer2016ArbitrageFree,Dudik2013Combinatorial}, and the portfolio trading of bonds.
 
 Many of these aforementioned  cases of bundling 
 are  motivated by outside business constraints; e.g.,
 there may be complements or substitutes between goods, or there may be practical reasons why a customer is forced to trade  bundles.
But what if all these outside concerns are removed? 
We want to understand whether
 a profit-maximizing market maker 
  in the  multiple-good case will want 
   to take advantage of the full richness of the space of mechanisms. 
(Indeed,
in classic auction design when selling multiple goods to a single buyer, optimal mechanisms may show this kind of complexity~\cite{Daskalakis2017Strong,Manelli2006Bundling,Kash2016Optimal}.)
This question motivates our work.
Under a deliberately simple model of additive and quasilinear traders, and considering an objective for the market maker that can take into account the possibility of adverse selection (following that of \citet{Milionis2022Automated}), we ask
 how much of the full space of possible mechanisms will a market maker want to use? 
We give answers to this question, 
and further believe that the computational and theoretical tools that we develop 
have broader interest.

\paragraph{Our methodological contributions}
The space of possible market making strategies is quite complex in the multiple good case,
and conjecturing the form of an optimal solution is not trivial.
To search through the space of mechanisms and develop our theoretical results and additional conjectures on optimal designs, we make use of 
{\em differentiable economics}~\cite{duetting2023optimalJACM}, which employs the tools of
 modern machine learning to search through the space of all mechanisms to optimize performance.
Through differentiable economics, we find conjectured optimal mechanisms that make use of the full richness of the space of possible mechanisms: displaying  bundling behavior across both buying and selling, and potentially offering a continuum of possible trades.
Differentiable economics has previously been used to find a optimal auctions in a few, stylized auction settings~\cite{duetting2023optimalJACM,Shen2019Automated}.
Here, we utilize differentiable economics, in particular {\em RochetNet}~\cite{duetting2023optimalJACM}, in studying and developing the theory for a new
kind of problem (optimal market making with multiple goods and adverse selection),
applying it from the start to guide our understanding and theoretical development. 
In doing so, we have found the methods of differentiable economics to be 
invaluable in guiding our exploration. 
The techniques we use appear to work quite well, and require only access to samples from the market maker's distribution of beliefs about traders. For this reason, we belief that future work can
 also pursue their use as a tool for data-driven design of market makers
 that can inform deployment.

\paragraph{Our theoretical contributions}
We provide a framework for formally proving the optimality of conjectured mechanisms that are generated through the differential economics pipeline. To do this, we treat market makers as strategyproof mechanisms, and show that the market maker design problem is dual to an optimal transport problem, extending the theory of optimal transport duality from~\cite{Daskalakis2017Strong,Kash2016Optimal} to allow for different constraints on the primal and dual, and giving a geometric interpretation of these constraints that allows solving the dual problem. 

In particular, we use differentiable economics to search for profitable mechanisms across multiple valuation distributions.
We find that some of the learned mechanisms make use of {\em mixed bundling}; i.e., trading in goods separately at one price, but also offering bundle discounts or taking payments in kind (accepting a sale of one good to discount the purchase of another). Other mechanisms offer a continuum of allocations in certain regions.
Based on this, 
we conjecture that the full space of mechanisms is in fact potentially useful to the market maker. 

For one of these cases, we apply the duality framework, constructing explicit dual certificates for a particular choice of distribution under a linear model of adverse selection, showing optimality for a family of mechanisms parameterized by the strength of adverse selection.
As conjectured, the optimal mechanisms offer a bid-ask spread for each good. However, they also offer bundled trades at better prices than the separate bids and asks. For a specific example in one case, the trader can buy one good for $\frac{5}{6}$, or sell another for $\frac{1}{6}$, but simultaneous buying and selling costs only $\frac{4-\sqrt{2}}{6}$. As the strength of adverse selection decreases, the prices for all  goods and bundles improve nonlinearly.

We thus answer our main question about the optimal design of market makers in this multiple good setting   in the affirmative: under a reasonable model of adverse selection, it may be optimal for a profit-maximizing market maker to engage in mixed bundling, with profits up to 11.4\% greater than separately making a market for
each good.
Beyond being of theoretical interest, we believe that this positive answer to the question 
about the richness of market making possibilities in this setting 
can motivate the use of multiple-good CFMMs in DeFi and perhaps also in prediction markets, as well as providing a  motivation for the increased use of combinatorial bundle order types in securities markets.

\subsection{Related Work}

\subsubsection{The multiple-good monopolist problem}

\citet{Adams1976Commodity} observed that for a monopolist selling more than one good, it may be advantageous to engage in mixed bundling, selling a bundle of multiple goods at a discounted price, while possibly also offering each good separately at an independent price.
Surprisingly, this advantage does not require that goods be complements or substitutes; rather, 
it can happen even if the goods are completely unrelated, for example with a buyer 
with additive values.

A long line of work \cite{McAfee1989Multiproduct, Mcafee1988Multidimensional,rochet1998ironing} seeks to characterize in what circumstances taking advantage of bundling is optimal.
\citet{Pavlov2011Optimal}, \citet{Manelli2006Bundling}, and 
\citet{Giannakopoulos2014Duality} all give explicit optimal solutions to particular special cases of the multiple-good monopolist problem, and show that these solutions  do sometimes 
 involve mixed bundling.

 \citet{Daskalakis2013Mechanism,Daskalakis2017Strong} present a general theory of strong duality, in which the optimal mechanism design problem is shown to be dual to a relaxed optimal transport problem.
They  show that depending on the particular problem, selling separately, selling only a bundle, mixed bundling, or even offering an infinite continuum of lottery bundles could be optimal.
\citet{Kleiner2019Strong} give an equivalent formulation of the problem in terms of infinite-dimensional linear programming, and \citet{Kash2016Optimal} allow for more general constraints on allocations.

\subsubsection{Prediction markets and automated market makers}

There is a rich literature on the use of AMMs to facilitate prediction markets~\cite{frongillo2017axiomatic,othman2011liquidity,Abernethy2012Characterization,abernethy2013efficient}. There are direct connections between truthful mechanisms and proper scoring rules~\cite[e.g.]{Frongillo2014General}. 

However, the goals of prediction markets are very different than those in the present paper:
 the purpose of a prediction market is  to elicit  useful information from traders.  Profit is usually not a concern, 
 and the market maker may be happy to tolerate a bounded loss if this helps the  market run smoothly. Because the goal is to elicit accurate predictions, these mechanisms are also  designed with  properties that are almost diametrically opposed to our goals. For example, they should be not merely weakly truthful, but strictly truthful, to uniquely distinguish all types (in the language of scoring rules, they should be \textit{strictly} proper and not just proper). This necessarily means that there cannot be a ``no-trade'' region where many types all receive the identical zero allocation; in contrast, we will see that  optimal market makers  often must do this in our setting. 
Moreover,   most prediction markets consider a single security at once  (see \cite{Dudik2013Combinatorial,Kroer2016ArbitrageFree} for examples of combinatorial markets).
We do note that there is other work from this literature that does focus on profit~\cite{abernethy2013adaptive,othman2011liquidity,das2008adapting}. 

\subsubsection{Constant-function market makers}

Constant-function market makers (CFMM) are a class of AMMs
that start with an initial portfolio of assets (one of which may be considered the numeraire) from a 
liquidity provider (LP).
They are willing to exchange any basket of assets for another basket of assets as long as some constant function of the portfolio (referred to as the ``invariant'' or ``curve'') is preserved.
\citet{Angeris2021Replicating} describes how to construct a CFMM curve that results in a given  payoff for the liquidity provider, as long as that payoff is concave in the price.
\citet{Milionis2022Automated} give one criterion, {\em loss-versus-rebalancing}, for measuring 
the performance over time of AMMs. \citet{MilionisMyersonianFrameworkOptimal2023} consider a flexible model of adverse selection over a single time step and show that optimal CFMM ``curves''
take the form of a bid-ask spread.
\citet{Goyal2023Finding}, similar to our work, consider optimizing CFMMs for various performance goals given market maker beliefs about the value of goods. Their main focus is  an objective related to liquidity, where they show a correspondence between market maker beliefs and various existing and new CFMM curves. 
They also consider optimizing for profit and loss-versus-rebalancing.
Crucially, they focus on a single-parameter setting, with one token trading against one numeraire.

\subsubsection{Characterization of truthful mechanisms}

\citet{Rochet1987necessary} gives an extremely useful characterization of dominant-strategy incentive compatible (DSIC, or strategyproof) mechanisms. Essentially, every truthful mechanism can be identified with a convex function mapping the participant's true type to their utility for bidding truthfully. The allocation rule (which uniquely determines the payment rule) for a given type corresponds to the (sub)gradient of this function. Finally, one can interpret the convex conjugate of this function \cite{Rockafellar2015Convex} as a summary of the ``menu'' mapping possible allocations to prices.
\citet{Frongillo2014General} extend this general idea to additional mechanism design problems such as truthful elicitation via scoring rules and liquidity provision in prediction markets.

There are in fact further similarities between all of the above results, which identify each mechanism with the utility it induces for an honest agent, %
 and connect strategyproofness to convexity. For example, \cite{Frongillo2023Axiomatic} show an equivalence between CFMMs and scoring rules. \citet{Angeris2021Replicating} characterize CFMMs in terms of a concave ``value function'' which, up to a linear translation, corresponds to the utility function of the mechanism participant. \citet{Frongillo2014General}, Section 1.2, give an exhaustive survey of related work with similar results.

\subsubsection{Automated mechanism design and differentiable economics}

The general lack of theoretical progress on the problem of optimal mechanism design has motivated the use of \textit{automated mechanism design}: the use of computational techniques to search for high-performing mechanisms on specific problem instances~\cite{Conitzer02Mechanism,Sandholm2003Automated,Likhodedov2004Methods,Sandholm2015Automated}. In the typical model of mechanism design, the valuation distribution is common knowledge,
and 
if samples are available from this distribution,  it is natural to formulate the mechanism design problem as a machine learning problem; indeed, a number of works have considered the learning-theoretic properties of various classes of mechanisms~\cite{Balcan2016Sample,Balcan2019Estimating,Balcan2018General}.

More recently, \citet{duetting2023optimalJACM} %
introduce \textit{differentiable economics}: the use of rich, flexible function approximators, trained via stochastic gradient descent, for strategyproof auction design. One of their architectures, {\em RegretNet}, works for multi-bidder auctions, with strategyproofness constraints  approximately enforced via a penalty term. Another architecture, {\em RochetNet}, is always strategyproof but  only well-defined for single-bidder auctions.
This single bidder fits with the present setting, where we seek the optimal design 
given knowledge of a distribution on the possible types of a 
single trader who will interact
with the market maker, and we make use of the RochetNet approach in the present paper.
Separately,~\citet{Shen2019Automated} consider {\em MenuNet}, an architecture which takes a similar approach to RochetNet and is also limited to single-bidder settings, and to which they apply to a broader class of value models beyond additive and unit-demand.
\citet{curry2023differentiable} present a direct generalization of these two approaches, appealing to affine maximizers to give an architecture that is always strategyproof and well defined for many-bidder auctions (although it may not always be able to represent the optimal mechanism). 
There are followups improving on the architecture of RegretNet~\cite{Likhodedov2004Methods,Rahme2020Permutation,Rahme2021Auction, Curry2020Certifying} and the \citet{curry2023differentiable} approach~\cite{Duan2023Scalable}.

\section{Our model}

Here we describe our model of optimal automated market-maker design.
We first describe our family of objectives, which are a generalization of those of \citet{MilionisMyersonianFrameworkOptimal2023} to multiple goods.
We then describe our constraints, that is, the set of mechanisms we are willing to consider. These correspond  to either strategyproof and individually rational (defined below)   direct-revelation mechanisms with feasible allocation rules, 
 or menus of (lottery) bundles at associated prices.\footnote{There are also connections to constant-function market makers, which we discuss in \Cref{app:cfmm}.}
An important detail of the market setting
is that we also need to handle adverse selection, reflecting that the trader
 may be better informed than the market maker.

\subsection{Mechanism Design Objectives}
\label{sec:objectives}

We assume that the market maker has some initial belief vector $c\in \mathcal{X} = [0,b]^d$, for some maximum value $b>0$,
 on the value for each of $d\geq 1$ goods. 
A trader has their own value vector $x\in \mathcal{X} = [0,b]^d$,
 and seeks to trade  with the market maker. The trader's value is, in mechanism design terms, their \textit{type}, drawn from some distribution with probability density function, $f(x)$.

We assume that the market maker is willing to buy or sell up to one unit of each good. 
After a trade, the market maker receives a profit equal to their gain or loss of money plus the value of goods bought or sold, also considering how their beliefs may be affected by the trade. 
In particular, there is the possibility of {\em adverse selection}: the market maker may be incorrect about the value $c$, and the trader may be better informed. To capture this possibility, following \citet{Milionis2022Automated}, we  adopt an \textit{update function}, $\pi(c, x): \mathcal{X} \times \mathcal{X} \rightarrow \mathcal{X}$.
 When the market maker observes a trader with value $x$, they update their own belief about the value of the goods from $c$ to $\pi(c, x)$ (note that, as in \citet{Milionis2022Automated}, we require that $\pi(c,c)=c$).
The resulting profit is the revenue minus the loss due to the sale of goods (respectively, payment to trader and profit from gain of goods). 

Suppose that after a trade takes place, the trader's  allocation of goods is $a \in [-1,1]^d$ and their  payment is $p \in \mathbb{R}$. Note that the trader may gain or lose a fractional amount of each good, and that payment $p$ may be negative in the case that they receive a payment from the market maker. 
Then the \textit{trader's utility} is their value of the allocation (perhaps negative, in the case of a sale)
 minus their payment: $a \cdot x - p$.
The \textit{market maker's profit} is  the revenue from the sale (perhaps negative if they pay the trader), minus the loss or gain from the transfer of goods, this measured at their updated belief about the value of the goods: $p - a \cdot \pi(c, x)$.
In other words, given beliefs about the value of goods,  the market maker and trader each have 
 \textit{additive} and \textit{quasilinear} utility.

We highlight three important special cases, each with a different definition of belief update, $\pi$:
\begin{enumerate}
\item When $\pi(c, x) = c$, this is the ``noise trading'' case, where the mechanism designer believes that the bidders are simply uninformed, with no adverse selection.
\item In the above case, if we further have that $c=0$, this recovers the ``selling-only'' case studied in the auction setting of \citet{Daskalakis2017Strong}.
\item If $\pi(c, x) = \lambda c + (1 - \lambda)x$, for some $0 \le \lambda \le 1$, this is the ``linear interpolation'' case from \citet{MilionisMyersonianFrameworkOptimal2023}. This is a simple model of adverse selection.
\end{enumerate}

\subsection{Class of mechanisms}

Following \citet{Rochet1987necessary}, we can consider three equivalent perspectives on our mechanisms. 
The first perspective is that of a \textit{strategyproof}, direct-revelation mechanism, in which the trader  directly reports their value vector $x$, and allocations and payments are determined according to an \textit{allocation rule}, $a(x) : \mathcal{X} \rightarrow [-1,1]^d$ and {\em payment rule}, $p(x): \mathcal{X} \rightarrow \mathbb{R}$.
The bidder then has a utility, which is a function of their valuation: $u(x) = a(x) \cdot x - p(x)$.
Strategyproofness means that there should be no incentive for the bidder to report anything other than their true value: that is, $\forall x, x': a(x') \cdot x - p(x') \leq u(x)$. 

The second perspective
is that a mechanism corresponds to offering a {\em menu}, specifying a price for any possible allocation $a$ (with infinite, positive prices for infeasible allocations), where the trader chooses freely among the menu elements.
One can denote this menu $u^*(a)$, because it is the Fenchel conjugate \cite{Rockafellar2015Convex} of the utility function $u(x)$ enjoyed by a bidder who chooses their best menu element,
 or who participates truthfully in the direct revelation mechanism.
A third perspective is that given a utility function $u(x)$, the gradients of the utility, $\nabla u(x)$,  define the direct revelation allocation rule $a(x)$, with payments $p(x) = \nabla u(x) \cdot x - u(x)$.
Rochet showed that all of these perspectives are in some sense equivalent.

We  find it especially useful to focus on the third perspective, and to design mechanisms through
the design of the trader's utility function $u(x)$,
because different mechanism design desiderata can be
easily connected to constraints on the agent's utility function.
The first three desiderata are standard in mechanism design and the fourth is special to this domain. 
\begin{itemize}
    \item \textbf{Strategyproofness} can be identified with \textbf{convexity} of the utility function~\cite{Rochet1987necessary}.
    \item \textbf{Individual rationality} means that if traders participate truthfully, they are guaranteed nonnegative utility, so can always safely participate in the mechanism; this can be identified with \textbf{nonnegativity} of the utility function: $\forall x, u(x) \geq 0$.  .
    \item \textbf{Feasibility of allocations} involves ensuring that the \textbf{gradients are bounded}. In particular, we would like that $\lVert a(x) \rVert_\infty = \lVert \nabla u(x) \rVert_\infty \leq 1$, or equivalently that $u(x)$ is \textbf{1-Lipschitz in the $\ell_1$ norm} (dual to the $\ell_\infty$ norm).
    \item In addition, there is always a \textbf{no-trade region} in the present setting,
 that includes at least the point where the trader and market maker agree on the price (and thus do not want to trade).
 Under our model, this  happens at the initial price belief $c$.
 Thus we also require  \textbf{u(c) = 0}, which by nonnegativity means utility attains its minimum at $c$.
\end{itemize}

Based on these considerations, we  define the following {\em convex set of feasible mechanisms}:
\begin{equation*}
    \mathcal{U} = \{ u : \mathcal{X} \rightarrow \mathbb{R}^+ \,\, |\,\, u \text{ convex }, \lVert u(x) \rVert_\infty \leq 1, u(c) = 0 \}.
\end{equation*}

We can also rewrite the market-maker's expected profit in terms of $u(x)$.
Since $a(x) = \nabla u(x)$, profit is $\nabla u(x) \cdot x - u(x) - \nabla u(x) \cdot \pi(c, x) = \nabla u(x) \cdot (x - \pi(c, x)) - u(x)$.
Thus, the market maker's objective of expected profit under belief updating,  for traders distributed according to $x \sim f(x)$, is:
\begin{equation}
\int_{\mathcal{X}} \left(\nabla u(x) \cdot (x - \pi(c, x)) - u(x) \right)f(x)\,dx.
    \label{eq:mechobjectiverepeat}
\end{equation}
\section{Establishing Duality}

To find the profit-maximizing mechanism, we now face an optimization problem over the space of feasible utility functions:
\begin{equation*}
\sup_{u \in \mathcal{U}} \int_{\mathcal{X}} \left(\nabla u(x) \cdot (x - \pi(c, x)) - u(x) \right)f(x)\,dx.
\end{equation*}
Our goal in this section is to establish {\em strong duality}: we will provide a dual minimization problem, and show that the primal and dual objectives coincide at the optimal point.
This will allow us to check optimality of a proposed mechanism by showing a matching dual solution.

\subsection{Linearizing the objective}
\label{ssec:linearize}

Similar to other work, \cite{Mcafee1988Multidimensional,Daskalakis2017Strong,Kash2016Optimal,rochet1998ironing},
 we can linearize our problem by doing integration by parts, establishing the following lemma.
\begin{lemma}
\label{lemma:byparts}
\begin{equation}
\int \left(\nabla u(x) \cdot (x - \pi(c,x)) - u(x)\right)f(x)\,dx = \int u \,d\mu^+ - \int u\,d\mu^-,
\end{equation}
where $\mu = \mu^+ - \mu^-$ is a signed measure which integrates to 0, defined as
\begin{equation}
\begin{aligned}
        \mu(A) &= \int_{\partial X} \mathbb{I}_A(x) f(x)(x - \pi(c, x))\cdot \hat{n}\,dx \\
        &- \int_{\mathcal{X}} \mathbb{I}_A(x) (\nabla f(x) \cdot (x - \pi(c,x)) 
         + (n+1 - \divv \pi(c,x))f(x))\,dx + \mathbb{I}_A(c).
        \end{aligned}
   \label{eq:transformedrepeat}
\end{equation}
\end{lemma}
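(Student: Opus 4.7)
The strategy is to recognize the identity as an integration-by-parts (divergence theorem) calculation applied to the vector field $V(x) := u(x)\,(x-\pi(c,x))\,f(x)$ on $\mathcal{X}$, followed by a small bookkeeping step that exploits the feasibility constraint $u(c)=0$.

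Writing $d$ for the ambient dimension (the $n$ appearing in the statement), the first step is to apply the product rule,
\[
\divv V \;=\; \nabla u(x)\cdot\bigl(x-\pi(c,x)\bigr)f(x) \;+\; u(x)\,\divv\!\bigl((x-\pi(c,x))f(x)\bigr),
\]
and expand the last divergence as $f(x)\bigl(d - \divv\pi(c,x)\bigr) + \nabla f(x)\cdot(x-\pi(c,x))$. The divergence theorem then gives $\int_{\mathcal{X}}\divv V\,dx = \int_{\partial\mathcal{X}} u\,f\,(x-\pi)\cdot\hat n\,dx$, and rearranging isolates the target integrand $\nabla u\cdot(x-\pi)f$ against a boundary term in $u$ and a bulk term in $u$.

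Second, I would subtract the extra $\int u(x)f(x)\,dx$ required by the left-hand side of the lemma; this $u\,f$ combines with the $u\,f\,(d - \divv\pi)$ already in the bulk to produce the claimed coefficient $(d+1 - \divv\pi)\,f$ on $u$. At this point the identity holds with $\mu$ given by everything except the extra point mass $\mathbb{I}_A(c)$. Third, to justify adding that point mass and to certify the ``integrates to $0$'' claim, I would apply the divergence theorem a second time, now to $(x-\pi(c,x))f(x)$ alone. This gives $\int_{\partial\mathcal{X}} f\,(x-\pi)\cdot\hat n\,dx = \int_{\mathcal{X}}[\nabla f\cdot(x-\pi) + (d-\divv\pi)f]\,dx$, from which one reads off that the measure without the $\mathbb{I}_A(c)$ term has total signed mass $-\int f = -1$. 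Adding $\mathbb{I}_A(c)$ therefore zeroes out the total mass, while contributing $u(c) = 0$ to $\int u\,d\mu$; so the added point mass is a free inclusion that both preserves the identity and makes $\mu$ a signed measure of zero total mass, hence Jordan-decomposable as $\mu^+ - \mu^-$.

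The main technical obstacle is regularity: $u\in\mathcal{U}$ is only convex and $1$-Lipschitz, so $\nabla u$ exists only almost everywhere, and a classical statement of the divergence theorem does not immediately apply. This can be handled by a standard mollification of $u$ into a smooth convex approximant, applying the theorem on each approximant, and passing to the limit using the uniform Lipschitz bound together with dominated convergence; alternatively one appeals directly to the Gauss--Green theorem for Lipschitz domains and $W^{1,\infty}$ functions. Under mild regularity on $f$ and on $\pi(c,\cdot)$ (which the differentiable-economics setup provides), every other step in the derivation is routine.
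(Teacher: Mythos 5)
Your proposal is correct and takes essentially the same route as the paper: the paper carries out the same integration by parts coordinate-by-coordinate (your single divergence-theorem application to $u(x)(x-\pi(c,x))f(x)$ is just a cleaner packaging of that), identifies the boundary and bulk terms identically, verifies the pre-point-mass measure has total mass $-1$ by the equivalent device of plugging in $u\equiv 1$, and then adds the point mass at $c$ exactly as you do, using $u(c)=0$. Your closing remark on regularity (mollification / Gauss--Green for Lipschitz $u$) is a point the paper's formal calculation does not address, so it is a welcome, if minor, strengthening rather than a divergence in method.
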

The calculation, which is mainly an application of integration by parts, is in \Cref{app:linearize}.

\paragraph{Interpretation of $\mu$} There are three terms in $\mu(A)$.
 The first term involves positive mass distributed along the boundary of the type space. The second  term involves a density of negative mass distributed across the interior (and possibly also touching the boundaries). The third term is an indicator function denoting one positive unit of mass at $c$. Because we require that $u(c) = 0$, we can add this term without altering the result, which is useful as it serves to balance positive and negative mass so that we have a true optimal transport problem between $\mu^+$ and $\mu^-$. We will see in \Cref{sec:warmup1d,sec:optimalmulti} below that the positive mass $\mathbb{I}_A(c)$ associated with this third term
will be spread outward to exactly cover the negative mass in the ``no-trade'' regions that show up in the center of the type space.

\subsection{Statement of duality}

Following the above discussion, the mechanism design problem is to design a profit-maximizing, feasible 
utility function to the trader, $u(x)$. We define a matching dual problem, and establish strong duality.
\begin{theorem}[Strong duality, following \cite{Villani2003Topics,Daskalakis2017Strong,Kash2016Optimal}]
We have
\begin{equation}
       \sup_{u \in \mathcal{U}} \int u \,d\mu^+ - \int u\,d\mu^-  = \inf_{\gamma: \gamma_1 \succeq \mu^+, \gamma_2 \preceq \mu^-} \int \lVert x - y \rVert_1,d\gamma(x,y),
\label{eq:strongduality}
\end{equation} 
where $\succeq$ is an integral stochastic order~\cite{muller1997stochastic}, representing the relation $\alpha \succeq \beta \iff \forall f \in \mathcal{U}:\int f\,d\alpha \geq \int f\,d\beta.$
\label{thm:strongduality}
\end{theorem}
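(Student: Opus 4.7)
The plan is to adapt the optimal transport duality arguments of \cite{Daskalakis2017Strong,Kash2016Optimal} to the present setting, which differs in that $\mathcal{U}$ carries the extra constraints of 1-Lipschitzness in $\ell_1$, nonnegativity, and $u(c)=0$ on top of convexity. I would proceed in two steps: weak duality (the $\leq$ direction), followed by the matching lower bound via a minimax argument.

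\textbf{Weak duality.} Fix any $u \in \mathcal{U}$ and any transport plan $\gamma$ with $\gamma_1 \succeq \mu^+$ and $\gamma_2 \preceq \mu^-$. By definition of the stochastic order applied to $u \in \mathcal{U}$, one has $\int u\,d\mu^+ \leq \int u\,d\gamma_1$ and $\int u\,d\mu^- \geq \int u\,d\gamma_2$. Subtracting and rewriting each marginal integral as an integral against the coupling,
\begin{equation*}
\int u\,d\mu^+ - \int u\,d\mu^- \;\leq\; \int \bigl(u(x) - u(y)\bigr)\,d\gamma(x,y) \;\leq\; \int \|x-y\|_1\,d\gamma(x,y),
\end{equation*}
where the last step uses that $u$ is 1-Lipschitz in the $\ell_1$ norm. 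Taking $\sup_u$ on the left and $\inf_\gamma$ on the right gives the $\leq$ direction.

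\textbf{Strong duality.} For the matching lower bound I would cast the primal as an infinite-dimensional linear program and apply a minimax theorem. By Arzel\`a--Ascoli, $\mathcal{U}$ is a compact convex subset of $C(\mathcal{X})$ in the sup-norm topology (equicontinuous via the Lipschitz bound, and uniformly bounded because $u(c)=0$ pins the values), while the set of nonnegative Radon measures $\gamma$ on $\mathcal{X}\times\mathcal{X}$ of appropriately bounded total mass is weak-$*$ compact and convex. The bilinear saddle functional
\begin{equation*}
\Phi(u,\gamma) \;=\; \int u\,d\mu^+ - \int u\,d\mu^- \;-\; \int \bigl(u(x)-u(y)\bigr)\,d\gamma(x,y) \;+\; \int \|x-y\|_1\,d\gamma(x,y)
\end{equation*}
is linear in each argument and continuous in the appropriate topologies, so Sion's minimax theorem (or the Fenchel--Rockafellar framework used in \cite{Villani2003Topics}) applies. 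The inner optimization in $\gamma$ (for fixed $u$) enforces the Lipschitz and stochastic-order constraints by driving $\Phi$ to $-\infty$ unless they hold, recovering the primal; the inner optimization in $u$ (for fixed $\gamma$) recovers the transport cost and the marginal inequalities, yielding the dual. Swapping $\sup_u\inf_\gamma = \inf_\gamma\sup_u$ then closes the gap.

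\textbf{Main obstacle.} The chief subtlety relative to \cite{Daskalakis2017Strong,Kash2016Optimal} is that the extra constraints defining $\mathcal{U}$ (nonnegativity and $u(c)=0$) must be correctly packaged into the stochastic order $\succeq$ on the $\gamma$-marginals rather than equality $\gamma_1 = \mu^+$, $\gamma_2 = \mu^-$. One must argue that this relaxation is exactly what dualizes those extra constraints: the nonnegativity $u\geq 0$ allows ``excess'' mass in $\gamma_1$ over $\mu^+$ to be harmlessly accommodated, and the normalization $u(c)=0$ is consistent with the indicator term $\mathbb{I}_A(c)$ contributed by $\mu$ in \Cref{lemma:byparts}, which is precisely what makes $\mu^+ - \mu^-$ integrate to zero and thereby makes the transport problem feasible. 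Verifying that $\{\gamma : \gamma_1 \succeq \mu^+, \gamma_2 \preceq \mu^-\}$ is nonempty and weak-$*$ closed, and that the saddle value is finite, constitutes the bulk of the remaining technical work but follows the same strategy as in the cited references.
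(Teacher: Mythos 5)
Your weak duality argument is correct and is exactly the paper's. The strong direction, however, has a genuine structural gap. In your saddle functional you let $u$ range over the bounded set $\mathcal{U}$ (this is what makes your Arzel\`a--Ascoli compactness claim true), but then for any fixed $\gamma$ the inner supremum $\sup_{u\in\mathcal{U}}\Phi(u,\gamma)$ is \emph{finite}: it adds a bounded penalty $\sup_{u\in\mathcal{U}}\bigl(\int u\,d(\mu^+-\gamma_1)+\int u\,d(\gamma_2-\mu^-)\bigr)$ to the transport cost rather than enforcing $\gamma_1\succeq\mu^+$, $\gamma_2\preceq\mu^-$ as hard constraints. A minimax swap would then only give ``primal $=$ penalized dual $\le$ dual,'' which combined with weak duality still leaves the gap open, unless you additionally prove the penalty is exact. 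To make the penalty take values in $\{0,+\infty\}$ you must test against the cone $\mathcal{U}^\circ$ (arbitrary positive scalings), but then the $u$-side is unbounded and your compactness argument for Sion's theorem collapses; the $\gamma$-side is not weak-$*$ compact either without an a priori mass bound that you would have to justify does not bind. A further, subtler issue is that even with the cone, a single-function saddle couples the two marginal conditions into one joint inequality $\int u\,d\mu^+-\int u\,d\gamma_1+\int u\,d\gamma_2-\int u\,d\mu^-\le 0$, so the resulting constrained infimum runs over a strictly larger set than $\{\gamma_1\succeq\mu^+,\ \gamma_2\preceq\mu^-\}$ and need not equal the stated dual.

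The paper resolves both issues by dualizing with a \emph{pair} $(\phi,\psi)\in\mathcal{U}^\circ$ subject to $\phi(x)-\psi(y)\le\lVert x-y\rVert_1$ via Fenchel--Rockafellar (\Cref{lemma:twofunctions}), which separates the two marginal constraints and handles the lack of compactness through the continuity/closedness hypotheses of that theorem, and then proves a nontrivial reduction back to a single function (\Cref{lemma:twotoone}): replace $\phi$ by the inf-convolution $\overline{\phi}(x)=\inf_y\psi(y)+\lVert x-y\rVert_1$, check that convexity, 1-Lipschitzness in $\ell_1$, and $\overline{\phi}(c)=0$ are all preserved, and then show the corresponding $\overline{\psi}$ coincides with $\overline{\phi}$. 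Your proposal gestures at Fenchel--Rockafellar parenthetically, which is indeed the workable route, but the argument you actually spell out (Sion with compact $\mathcal{U}$) would not produce the dual constraints, and you have no analogue of the two-function-to-one-function step where the condition $u(c)=0$ must be explicitly preserved.
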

The proof is in \Cref{ssec:dualproof}. While we focus on \cref{eq:mechobjectiverepeat} as the objective in this paper, the strong duality result also applies to any other objective that can be linearized into an integral of
the trader's utility function, $u$, against some balanced $\mu^+$ and $\mu^-$.

\subsection{Interpretation of the dual constraints}

The relation which we denote here as $\succeq$ has a geometric interpretation.
Given some $\alpha \succeq \beta$, any modification to transform $\alpha$ into $\alpha'$ that is guaranteed to weakly increase the value of $\int f\,d\alpha$ will ensure that $\alpha' \succeq \beta$ as well.
In our case, $f$ is required to be convex, so any mean-preserving spread (as in the sweeping/\textit{balayage} discussed in \citet{rochet1998ironing}) of positive mass will weakly increase the value of the integral.
Additionally, we require that $f$ attain its minimum at $c$. This along with convexity means that moving positive mass in any direction coordinate-wise away from $c$ weakly increases the value of the integral.
(For negative mass, we have mean-preserving contractions and coordinate-wise moves towards $c$ respectively.)

Therefore, either mean-preserving spreads, or moving mass in any direction coordinate-wise away from $c$, can be used to improve the dual objective without paying a transport cost.
In the special case when $c$ is fixed to 0,  we recover the selling-only, multiple good monopolist problem. Mean-preserving spreads are still allowed, and moving mass away from $c=0$ amounts to moving mass upwards/rightwards, so $\succeq$ becomes precisely the convex dominance condition defined in \cite{Daskalakis2017Strong}.

Also,  we have $\alpha \succeq \alpha$, and so
 the $\succeq$ relation represents a \textit{relaxation} of the typical equality constraints in an optimal transport problem.
This relaxation of the dual is needed to maintain zero duality gap, because the primal mechanism design problem is tightened by additional constraints on the functions as a result of enforcing mechanism design desiderata such as strategyproofness.

In addition, scaling $f$ by a constant does not change anything. As such, the same $\succeq$ is defined if we instead consider $\mathcal{U}^\circ$, the cone generated by $\mathcal{U}$. $u \in \mathcal{U}^\circ$ are still convex and attain their minimum of $0$ at $c$, but are no longer necessarily 1-Lipschitz.

\subsection{Proving duality}
\label{ssec:dualproof}

\citet{Daskalakis2017Strong} generalize the standard proof of optimal transport duality from \citet{Villani2003Topics}; \citet{Kash2016Optimal} in turn builds further on each of these theories.\footnote{\citet{Kleiner2019Strong} give an alternate proof in terms of infinite-dimensional linear programming. We focus on the optimal transport interpretation because thinking geometrically in terms of transport plans is useful for actually finding dual solutions. But the linear programming interpretation may be helpful for 
some readers to keep in mind.}
We follow in this tradition to prove \Cref{thm:strongduality}, for completeness reproducing much of the same structure of the proof. The key differences in our work are that we consider a different set of feasible functions $\mathcal{U}$, and therefore a different definition of the relation $\succeq$, which requires some technical modifications to make the proof go through. Additionally, actually finding dual solutions requires a geometric interpretation of the relation $\succeq$, which is  different than the integral stochastic orders in prior work.

\subsubsection{Weak Duality}
It is easy to establish weak duality, in the same way as \cite{Daskalakis2017Strong,Kash2016Optimal}:
\begin{align*}
   \int_{\mathcal{X}} u \,d\mu^+ - \int_{\mathcal{X}} u \,d\mu^-  &\leq   \int_{\mathcal{X}}  u\,d(\gamma_1 - \gamma_2) \quad \text{ by definition of} \succeq\\
   &=  \int_{\mathcal{X} \times \mathcal{X}} (u(x) - u(y))\,d\gamma(x,y) \\
   &\leq \int_{\mathcal{X} \times \mathcal{X}} \lVert x - y \rVert\,d\gamma(x,y) \quad \text{by 1-Lipschitz property of $u$}.
\end{align*}
Weak duality already suffices for using the framework to prove optimality.
\subsubsection{Strong Duality}
We start by showing that the transport problem is dual to a an optimization problem over a pair of functions.
\begin{lemma}
     $$\inf_{\gamma: \gamma_1 \succeq \mu^+, \gamma_2 \preceq \mu^-} \int \lVert x - y \rVert_1\,d\gamma(x,y) = \sup_{\phi(x) - \psi(y) \leq \lVert x - y \rVert_1, \phi, \psi \in \mathcal{U}^\circ} \int \phi\,d\mu^+ - \int \psi\,d\mu^-.$$
\label{lemma:twofunctions}
\end{lemma}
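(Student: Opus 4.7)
The strategy is to encode the stochastic-order constraints on the marginals as Lagrangian terms and then apply a minimax theorem, mimicking the standard derivation of Kantorovich duality. The only difference from the classical setting is that the relaxation $\gamma_1 \succeq \mu^+$, $\gamma_2 \preceq \mu^-$ is precisely dual to restricting the potentials $\phi, \psi$ to the cone $\mathcal{U}^\circ$.

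First I would observe that, because $\mathcal{U}^\circ$ is a cone containing $0$, the (convex-analytic) indicator of $\{\gamma : \gamma_1 \succeq \mu^+\}$ admits the variational representation
\begin{equation*}
\iota_{\gamma_1 \succeq \mu^+}(\gamma) \;=\; \sup_{\phi \in \mathcal{U}^\circ} \int \phi\,d(\mu^+ - \gamma_1),
\end{equation*}
since the right-hand side is attained by $\phi=0$ when the constraint holds, and scales to $+\infty$ otherwise. An analogous identity using $\psi \in \mathcal{U}^\circ$ encodes $\gamma_2 \preceq \mu^-$. Substituting both into the primal gives the unconstrained reformulation
\begin{equation*}
\inf_{\gamma \geq 0}\; \sup_{\phi,\psi \in \mathcal{U}^\circ} \left[ \int \bigl(\|x-y\|_1 - \phi(x) + \psi(y)\bigr)\,d\gamma(x,y) + \int \phi\,d\mu^+ - \int \psi\,d\mu^- \right].
\end{equation*}

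Next I would swap the order of $\inf_\gamma$ and $\sup_{\phi,\psi}$ via a minimax theorem. Once this swap is performed, the inner infimum over nonnegative Radon measures $\gamma$ equals $0$ when $\phi(x) - \psi(y) \leq \|x-y\|_1$ pointwise (attained at $\gamma=0$) and $-\infty$ otherwise, thereby extracting exactly the pointwise constraint appearing in the dual. The residual terms $\int \phi\,d\mu^+ - \int \psi\,d\mu^-$ give the claimed supremum.

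The main obstacle is justifying the minimax swap, because $\mathcal{U}^\circ$ is an unbounded cone of functions and the duality pairing is between Radon measures and continuous functions on the compact product $\mathcal{X} \times \mathcal{X}$. My plan is to follow the Fenchel--Rockafellar framework of \citet{Daskalakis2017Strong} and \citet{Kash2016Optimal}: first restrict to the truncated sets $\mathcal{U}^\circ_R = \mathcal{U}^\circ \cap \{\phi : \|\phi\|_\infty \leq R\}$, which are convex, weak-$*$ closed, and together with the pointwise constraint $\phi - \psi \leq \|\cdot\|_1$ give a compact, convex family of potentials on the compact type space; apply Sion's theorem (or equivalently Fenchel--Rockafellar, with the required continuity point given by taking $\phi=\psi=0$ and any admissible $\gamma$ such as the diagonal transport of $\mu^+$, which is feasible because $\mathbb{I}_c$ concentrates in $\mu^+$); then pass $R \to \infty$, using the $1$-Lipschitz upper bound on admissible $\phi,\psi$ (inherited from $\phi - \psi \leq \|\cdot\|_1$ together with membership in $\mathcal{U}^\circ$, which controls the growth) and an Arzel\`a--Ascoli argument to extract a limit that remains in $\mathcal{U}^\circ$ and achieves the supremum. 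The key verification specific to our setting, as opposed to prior work, is that $\mathcal{U}^\circ$ is closed under the operations invoked in this limiting argument (pointwise suprema of bounded subfamilies, translation, and scaling), which follows directly from the definition: convexity, nonnegativity, and the vanishing condition $u(c) = 0$ are all preserved.
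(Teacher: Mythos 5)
Your setup---encoding $\gamma_1 \succeq \mu^+$ and $\gamma_2 \preceq \mu^-$ as suprema of linear penalties over the cone $\mathcal{U}^\circ$ and then exchanging $\inf_\gamma$ and $\sup_{\phi,\psi}$---is the same Fenchel--Rockafellar route the paper takes, and your extraction of the pointwise constraint $\phi(x)-\psi(y)\leq \lVert x-y\rVert_1$ from the inner infimum over $\gamma \geq 0$ is correct. The gap is in the justification of the swap, which you rightly flag as the crux but do not actually secure. First, feasibility does not give the claimed $1$-Lipschitz bound: the pair $\phi \equiv 0$, $\psi(x) = M\lVert x - c\rVert_1$ lies in $\mathcal{U}^\circ \times \mathcal{U}^\circ$ and satisfies $\phi(x)-\psi(y) \leq 0 \leq \lVert x-y\rVert_1$ for every $M$, so admissible potentials are not equi-Lipschitz and the Arzel\`a--Ascoli extraction fails as stated; uniform Lipschitz control only becomes available after replacing $(\phi,\psi)$ by their inf-/sup-convolutions, which is precisely the content of the subsequent step (\Cref{lemma:twotoone}) and cannot be assumed here. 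Second, your proposed qualification point is wrong: taking $\phi=\psi=0$ corresponds to $f \equiv 0$, which lies on the boundary of $\{f \leq \lVert x-y\rVert_1\}$ (equality on the diagonal), so the convex indicator $\Theta$ is not continuous there---arbitrarily small positive perturbations send it to $+\infty$. The classical argument, and the paper, use an interior point such as $f \equiv -1$; note also that the qualification condition concerns the function variable, not a choice of feasible $\gamma$, so ``any admissible $\gamma$'' plays no role.

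Third, truncating to $\mathcal{U}^\circ_R$ breaks the identity your reformulation relies on: once the cone is truncated, $\sup_{\phi \in \mathcal{U}^\circ_R} \int \phi\, d(\mu^+ - \gamma_1)$ is a finite penalty of order $R$ rather than the indicator of the constraint, so the truncated saddle problem is a penalized transport problem, not the original one; recovering the constrained value as $R \to \infty$ requires a separate lower-semicontinuity and compactness argument on the $\gamma$ side (the mass of feasible $\gamma$ is not a priori bounded), which you do not supply. The paper avoids all of this by applying Fenchel--Rockafellar directly to the untruncated problem: it writes the primal as $\inf_\gamma \Theta^*(\gamma) - \Xi^*(\gamma)$, exhibits closed proper functions $\Theta^{**}$ (the indicator of $f \leq \lVert x-y\rVert_1$) and $\Xi^{**}$ (defined on differences $f = \phi - \psi$ with $\phi, \psi \in \mathcal{U}^\circ$), verifies $\Xi^{***} = \Xi^*$ using exactly the scaling-to-$-\infty$ cone property you invoke, and checks the continuity condition at $f \equiv -1$. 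Recasting your Lagrangian argument in that form, and deferring the Lipschitz regularization to the next lemma, would close the gap.
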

The proof, similar to those in previous work, is deferred to \Cref{app:linearize}.

\paragraph{Optimal Solution Involves a Single Function in $\mathcal{U}$}

The next step is to show that some optimal, feasible pair $\phi, \psi$ which lie in the cone $\mathcal{U}^\circ$ can be replaced with a weakly better pair of two identical functions $\overline{\phi}, \overline{\phi}$ in $\mathcal{U}$---in other words, a single feasible mechanism. The technique here is similar but not identical to prior work, due to the need to preserve the $u(c) = 0$ property.
\begin{lemma}
  $$\sup_{\phi(x) - \psi(y) \leq \lVert x - y \rVert_1, \phi, \psi \in \mathcal{U}^\circ} \int \phi\,d\mu^+ - \int \psi\,d\mu^- = \sup_{u \in \mathcal{U}} \int u \,d\mu^+ - \int u\,d\mu^-.$$
    \label{lemma:twotoone}
\end{lemma}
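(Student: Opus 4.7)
The plan is to prove the two inequalities in \Cref{lemma:twotoone} separately. The easy direction, that the right-hand side is at most the left-hand side, is obtained by taking $\phi = \psi = u$ for any $u \in \mathcal{U}$: since $\mathcal{U} \subseteq \mathcal{U}^\circ$ and $u$ is $1$-Lipschitz in $\ell_1$, this pair is feasible for the LHS and delivers the same objective value, so taking the supremum gives the inequality.

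For the reverse inequality I use a one-sided $c$-transform (infimal convolution). Given any feasible pair $\phi, \psi \in \mathcal{U}^\circ$ with $\phi(x) - \psi(y) \leq \|x-y\|_1$, set
\[
u(x) := \inf_{y \in \mathcal{X}} \bigl( \psi(y) + \|x-y\|_1 \bigr).
\]
The central claim is that $u \in \mathcal{U}$ and that $\phi \leq u \leq \psi$ pointwise, which, by nonnegativity of $\mu^+$ and $\mu^-$, immediately yields $\int u\, d\mu^+ - \int u\, d\mu^- \geq \int \phi\, d\mu^+ - \int \psi\, d\mu^-$; passing to the supremum on the right then finishes the inequality.

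To substantiate the claim I would check, in order: (i) $u \leq \psi$ by taking $y = x$ in the infimum; (ii) $u \geq \phi$ by rewriting the feasibility constraint as $\phi(x) \leq \psi(y) + \|x-y\|_1$ for every $y$ and infimizing; (iii) $u$ is $1$-Lipschitz in $\ell_1$ as an infimum of a family of functions that are each $1$-Lipschitz in $x$; (iv) $u$ is convex, since the joint map $(x,y) \mapsto \psi(y) + \|x-y\|_1$ is convex and a partial infimum of a jointly convex function is convex in the remaining argument; (v) $u \geq \phi \geq 0$; and (vi) $u(c) = 0$, which follows from $u(c) \leq \psi(c) = 0$ combined with $u(c) \geq \phi(c) = 0$, using that every element of $\mathcal{U}^\circ$ attains its minimum value $0$ at $c$.

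The main obstacle, and the step where this argument departs from the analogous ones in \cite{Daskalakis2017Strong,Kash2016Optimal}, is securing $u(c) = 0$. This is precisely why the dual variables must be restricted to the cone $\mathcal{U}^\circ$ (rather than the broader class of convex $1$-Lipschitz functions used in the classical monopolist setup): the pin at $c$ is inherited by the inf-convolution only because $\psi$ itself vanishes at $c$. A secondary technicality is ensuring the infimum is well-defined and the resulting $u$ is integrable against $\mu^\pm$; this reduces to compactness of $\mathcal{X} = [0,b]^d$ together with the boundedness of $\psi$ on this compact box (with any boundary irregularities handled by passing to a lower semicontinuous representative, as in the prior proofs), after which the $1$-Lipschitz property automatically makes $u$ continuous and therefore measurable.
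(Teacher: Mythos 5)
Your proof is correct and uses essentially the paper's approach: the key object is the same inf-convolution $\overline{\phi}(x)=\inf_y\bigl(\psi(y)+\lVert x-y\rVert_1\bigr)$, with the same feasibility checks (convexity, $1$-Lipschitzness, nonnegativity, and the pin $u(c)=0$ inherited from $\psi(c)=\phi(c)=0$), plus the trivial direction via $\phi=\psi=u$. The only difference is a mild streamlining: you exploit the sandwich $\phi\le u\le\psi$ to improve the $\mu^+$ and $\mu^-$ terms simultaneously, whereas the paper performs a second transform $\overline{\psi}(y)=\sup_x\overline{\phi}(x)-\lVert x-y\rVert_1$ and argues $\overline{\psi}=\overline{\phi}$; both are valid and yield the same single function in $\mathcal{U}$.
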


\begin{proof}
First, let $\overline{\phi}(x) = \inf_y \psi(y) + \lVert x - y \rVert_1$.
$\overline{\phi}$ is still a feasible function:
\begin{itemize}
    \item it maintains convexity because $\psi(y) + \lVert x - y \rVert_1$, a sum of convex functions, is convex in $x$ and $y$~\cite{Rockafellar2015Convex,Kleiner2019Strong}.

\item $\overline{\phi}(c) = 0$ because $\psi(c) = 0$ and the $\lVert \cdot \rVert_1$ term is minimized to 0 by setting $y = c$ also.
\item Moreover, for any $y$ that is the minimizer for $\overline{\phi}(x)$, and for all $x'$, $\overline{\phi}(x') - \overline{\phi}(x) \leq \psi(y) + \lVert y - x' \rVert_1 - \psi(y) - \lVert y - x \rVert_1 \leq \lVert x' - x\rVert$, so $\overline{\phi}$ is also 1-Lipschitz in the $\ell_1$ norm. 
\end{itemize}

$\overline{\phi}$ is thus feasible, lying not just in $\mathcal{U}^\circ$ but actually in $\mathcal{U}$, and, because $\phi(x) \leq \psi(y) + \lVert x - y \rVert_1$, replacing $\phi$ by $\overline{\phi}$ weakly improves the objective (by increasing the $\mu^+$ term). So we now have a new optimal pair $\overline{\phi}, \psi$, with $\overline{\phi} \in \mathcal{U}$ and $\psi \in \mathcal{U}^\circ$.

We can now define $\overline{\psi}(y) = \sup_x \overline{\phi}(x) - \lVert x - y \rVert_1$.
$\overline{\psi}(y) \geq \overline{\phi}(x) - \lVert x - y \rVert_1$, so feasibility is maintained and the objective is weakly improved (by decreasing the $\mu^-$ term).
And in fact, $\overline{\psi} = \overline{\phi}$.
First, $\overline{\psi}(y) \geq \overline{\phi}(y) - \lVert y - y \rVert$ due to the supremum.
Then, 
\begin{align*}
    \overline{\psi}(y) &= \sup_x \overline{\phi}(x) - \lVert x - y \rVert \\
    &= \overline{\phi}(y) + \sup_x \left(\overline{\phi}(x) - \lVert x - y \rVert\right)- \overline{\phi}(y) \\
    &= \overline{\phi}(y) + \sup_x \left(\overline{\phi}(x) -  \overline{\phi}(y) - \lVert x - y \rVert\right) \\
    &\leq \overline{\phi}(y),
\end{align*}
where the last inequality follows because $\overline{\phi}$ is 1-Lipschitz, so the objective of the supremum is always nonpositive.
Thus $\overline{\psi} = \overline{\phi}$, and we have a feasible pair $\overline{\phi}, \overline{\phi}$, with $\overline{\phi} \in \mathcal{U}$, which has weakly improved the objective.
\end{proof}

Combining the above steps, we have
\begin{proof}[Proof of \Cref{thm:strongduality}]
\begin{align*}
    \inf_{\gamma: \gamma_1 \succeq \mu^+, \gamma_2 \preceq \mu^-} \int \lVert x - y \rVert_1\,d\gamma(x,y) &=  \sup_{\phi(x) - \psi(y) \leq \lVert x - y \rVert_1, \phi, \psi \in \mathcal{U}^\circ} \int \phi\,d\mu^+ - \int \psi\,d\mu^- \text{ by \Cref{lemma:twofunctions} }\\
    &= \sup_{u \in \mathcal{U}} \int u \,d\mu^+ - \int u\,d\mu^- \text{ by \Cref{lemma:twotoone}.} \qedhere
\end{align*}
\end{proof}
\section{Warmup: a 1D example}
\label{sec:warmup1d}
\begin{figure}
    \centering
    \includegraphics[width=0.4\textwidth]{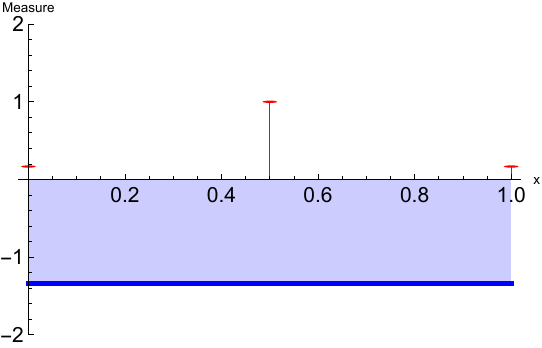}\includegraphics[width=0.3\textwidth]{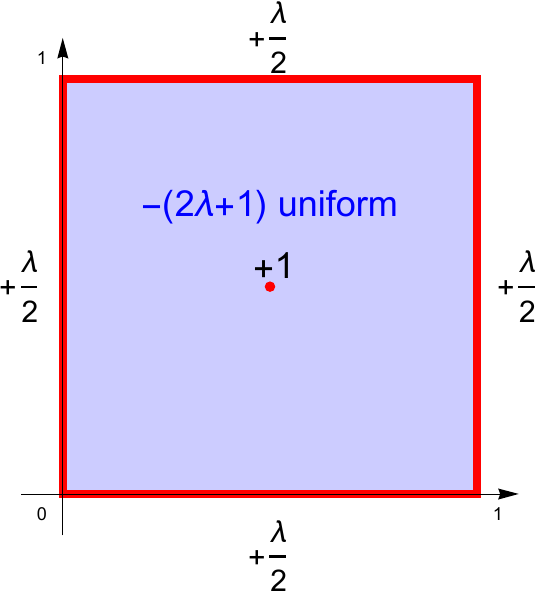}
    \caption{Left: A visualization of the single-dimensional signed measure under adverse selection, for a uniform valuation distribution, $\lambda=\frac{1}{3}$ and initial valuation $c=\frac{1}{2}$. Right: A visualization of the two-dimensional signed measure under adverse selection, for uniform valuation distribution and initial valuation $c=(\frac{1}{2}, \frac{1}{2})$. There is positive mass $\frac{\lambda}{2}$ distributed along the boundaries and a point mass at the point $c$ and $-(2\lambda + 1)$ negative mass spread uniformly through the space. %
    \label{fig:adversemeasure}}
\end{figure}

\citet{MilionisMyersonianFrameworkOptimal2023} give optimal mechanisms for the 1-dimensional case, that is, for the case of a market marker trading a single kind of good for a numeraire.
 They show that optimal mechanisms in this case take the simple form of a bid-ask spread:
 an offer to sell one unit of a good at some price, and buy at a higher price. By analogy to a Myerson auction~\cite{Myerson1981Optimal}, they set these prices to maximize a ``virtual valuation'' which depends on the type distribution.

Their problem setting is the one-dimensional case of the framework we study here, so we can view their solutions as convex utility functions lying in $\mathcal{U}$. As a test and demonstration of our framework, we can construct dual solutions whose transport cost matches the expected profit, establishing optimality in an alternate way.

\subsection{Constructing the transformed measure}
Consider the case where the mechanism designer updates their beliefs about prices by linearly interpolating, so that $\pi(c, x) = \lambda c + (1 - \lambda) x$, and distributions are uniform.

We can consider the transformed measure,  which places mass as  follows: 
\begin{itemize}
    \item $-(n+1)=-2$ uniformly.
    \item $+(1-\lambda)$ uniformly from the $f(x) \divv \pi = \pi'$ term.
    \item The above cancels out to $-(\lambda + 1)$ uniformly.
    \item $+\lambda c$ mass at 0 (left normal term).
    \item $+(\lambda - \lambda c)$ mass at 1 (right normal term).
    \item +1 mass at $c$.
\end{itemize}
The total mass of $\mu^+$ and $\mu^-$ are each $\lambda + 1$, so $\mu$ integrates to 0.
For a visualization of this measure, see Fig. \ref{fig:adversemeasure}.

\subsection{Conjecture of optimal mechanism}  

Following~\citet{MilionisMyersonianFrameworkOptimal2023}, the upper and lower virtual value functions are $\lambda(x - c) - (1 - x)$ and $\lambda(c - x) - x$, so selling and buying at $(1+\lambda c)/(\lambda + 1)$ and $(\lambda / (\lambda + 1)) c$ respectively is optimal.
The expected profit of this mechanism, calculated directly,
 is $\frac{(2 (c-1) c+1) \lambda ^2}{2 (\lambda +1)}$.

\subsection{Constructing the dual certificate} 

$\mu^-$ thus has a CDF of $(1+\lambda)x$, i.e., it is uniform.
 Again, to construct $\gamma_1$, we can perform preprocessing 
 on $\mu^+$ to spread the point mass at $c$ out to a uniform mass on the interval $[\frac{c \lambda}{1 + \lambda}, \frac{1+ c \lambda}{1+\lambda}]$, resulting in CDF
\begin{equation*}
    C_{\gamma_1} = \begin{cases}
 0 & x<0 \\
 c \lambda  & 0\leq x<\frac{c \lambda }{\lambda +1} \\
 c \lambda +\frac{x-\frac{c \lambda }{\lambda +1}}{\frac{c \lambda +1}{\lambda
   +1}-\frac{c \lambda }{\lambda +1}} & \frac{c \lambda }{\lambda +1}\leq x<\frac{c
   \lambda +1}{\lambda +1} \\
 c \lambda +1 & \frac{c \lambda +1}{\lambda +1}\leq x<1 \\
 \lambda +1 & 1\leq x
\end{cases}.
\end{equation*}
Letting $\gamma_2 = \mu^-$, we have a transport cost of $\int_{-\infty}^{\infty} |C_{\gamma_1} - C_{\gamma_2}| = \frac{(2 (c-1) c+1) \lambda ^2}{2 (\lambda +1)}$, which matches the expected profit, proving the proposed mechanism optimal.

Observe that in the above case, the point mass $\mathbb{I}_A(c)$ is spread out via preprocessing at no transport cost. The endpoints of the area to which it is spread precisely correspond to the no-trade region which is inside the bid and ask prices.

\section{Optimal mechanisms in multi-parameter settings}
\label{sec:optimalmulti}

Having warmed up on a 1D case where the optimal solution was already known, 
we proceed to apply the dual transport
framework in earnest.
This is not so easy to do. We have established  that for any optimal mechanism, its expected profit must be matched by the transport cost of the dual transport problem.
However, this gives us little insight into what an optimal mechanism might look like for a given problem instance; 
in particular, any combination of allocations and prices could be optimal.

This is where we turn to differentiable economics to search for profitable mechanisms.
Without any prior knowledge about the structure of the optimal mechanism, but only sample access to the valuation distribution, we show that we can use these techniques from differentiable economics to
 find high-performing mechanisms.
We treat the results coming from this computational pipeline as conjectures for the form of the optimal mechanism. Although they are necessarily only numerical approximations, they contain a lot of useful information that we can use, along with knowledge about the dual program, to come up with plausible analytic expressions for the menus and prices. Then, we construct a transport plan for the dual and show that the primal and dual values match.

This general approach has been used before to find a few new optimal mechanisms in 
simple, single bidder auction problems~\cite{Shen2019Automated,duetting2023optimalJACM}. Here, though, we apply differentiable economics as a tool throughout our work, adopting it to develop intuitions and theoretical directions  for a new problem from the very start of our work (in fact, simple experiments with differentiable economics were able to yield helpful clues before we began proving mechanisms to be optimal, in particular providing hints about the geometric interpretation of the relation $\succeq$).

\subsection{Differentiable economics: Architecture and training methods}

We encode our convex utility function (in the style of \textit{RochetNet} from \citet{duetting2023optimalJACM}) as a maximization over a large number of hyperplanes, each corresponding to an allocation $a_i$ (represented by a parameter $\alpha \in \mathbb{R}$ which is scaled via a sigmoid transformation to encode a feasible allocation in $[-1,1]$)and payment $p_i$:
\begin{equation*}
    u(x) = \max_i a_i \cdot x - p_i; \quad \mbox{where}\ a_i = 2 \sigma(\alpha) - 1, \alpha_i \in \mathbb{R}, p_i \in \mathbb{R}.
\end{equation*}

Motivated by empirical results \cite{duetting2023optimalJACM,curry2023differentiable} as well as theory \cite{hertrich2023mode}, we overparameterize our networks with $2^{10}$ menu items, even though 
 few menu items are typically used. We train to directly optimize the profit objective, using softmax with a temperature of 100 as a differentiable surrogate for the max operation, and use Adam with the typical learning rate of $10^{-3}$ and large batch sizes of $2^{16}$ or $2^{15}$. Each job was scheduled on a compute cluster, running with a single A100 GPU.

\subsection{A new family of optimal mechanisms}
\label{ssec:uniform2d}

We consider the  case of two 
kinds of goods, with values uniformly distributed on $[0,1]$, a market maker with initial valuation of $c=(\frac{1}{2}, \frac{1}{2})$, and a linear update model, $\pi(c, x) = \lambda c + (1 - \lambda)x$. We present a new family of optimal mechanisms for this setting. %

We train models for a range of values of $\lambda$,
 some of which are shown in \Cref{fig:noiserochet}.
These mechanisms have some shared structure. There is symmetry when reflecting across $y=x$, $y=\frac{1}{2}$, and $x=\frac{1}{2}$. 
 They only make deterministic allocations (i.e., either selling -1, 0, or 1 units of each good), and within deterministic allocations, they offer every possible allocation for some price.
There is a no-trade region centered at the initial price $c=(\frac{1}{2}, \frac{1}{2})$, and 
as we would expect, this increases in size (i.e., the prices for the trader get worse) as
 adverse selection becomes stronger.

\subsubsection{Constructing the transformed measure}

The transformed measure (Eq.~\ref{eq:mechobjectiverepeat}) for this case is:
\begin{itemize}
    \item $-(2\lambda + 1)$ mass distributed uniformly.
    \item A $+1$ point mass at $c$.
    \item Mass of $+(1-c_1)\lambda$ on the right boundary, $+(1-c_2)\lambda$ on the top boundary, $+ c_1\lambda$ on the left boundary, and $+c_2\lambda$ on the bottom boundary.
\end{itemize}
The positive and negative masses each have magnitude $2\lambda + 1$. The measure for the case where $c=(\frac{1}{2}, \frac{1}{2})$ is visualized in Fig. \ref{fig:adversemeasure}.

\begin{figure}
    \centering
    \includegraphics[width=0.9\textwidth]{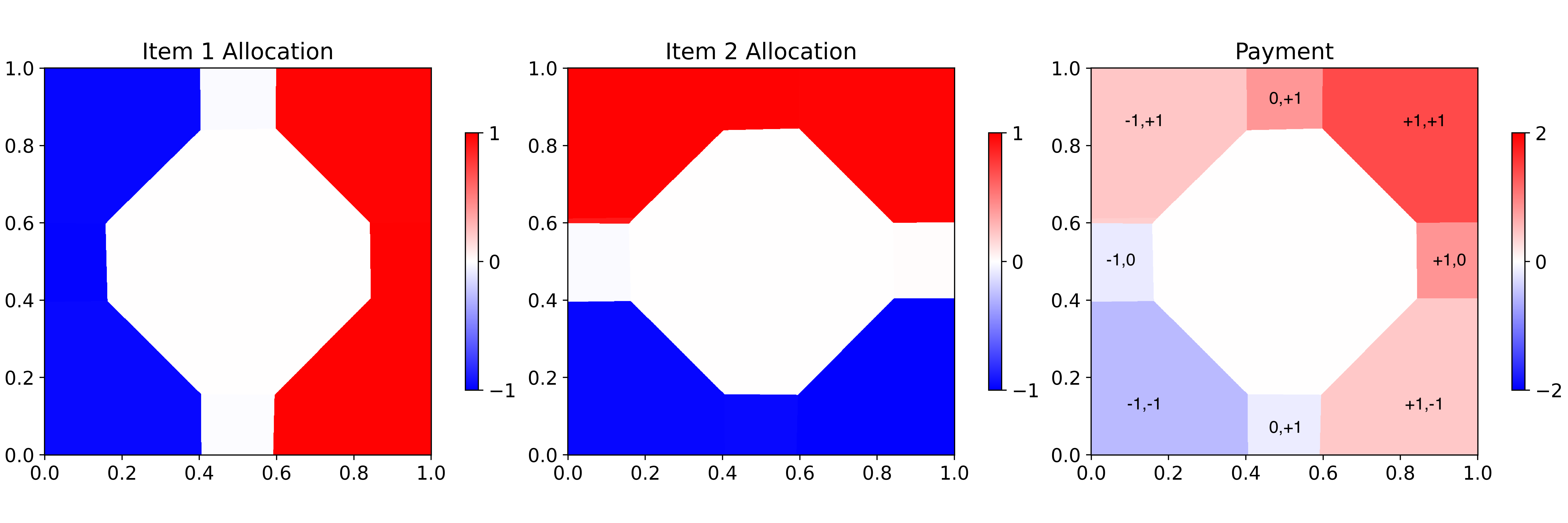}
    \includegraphics[width=0.9\textwidth]{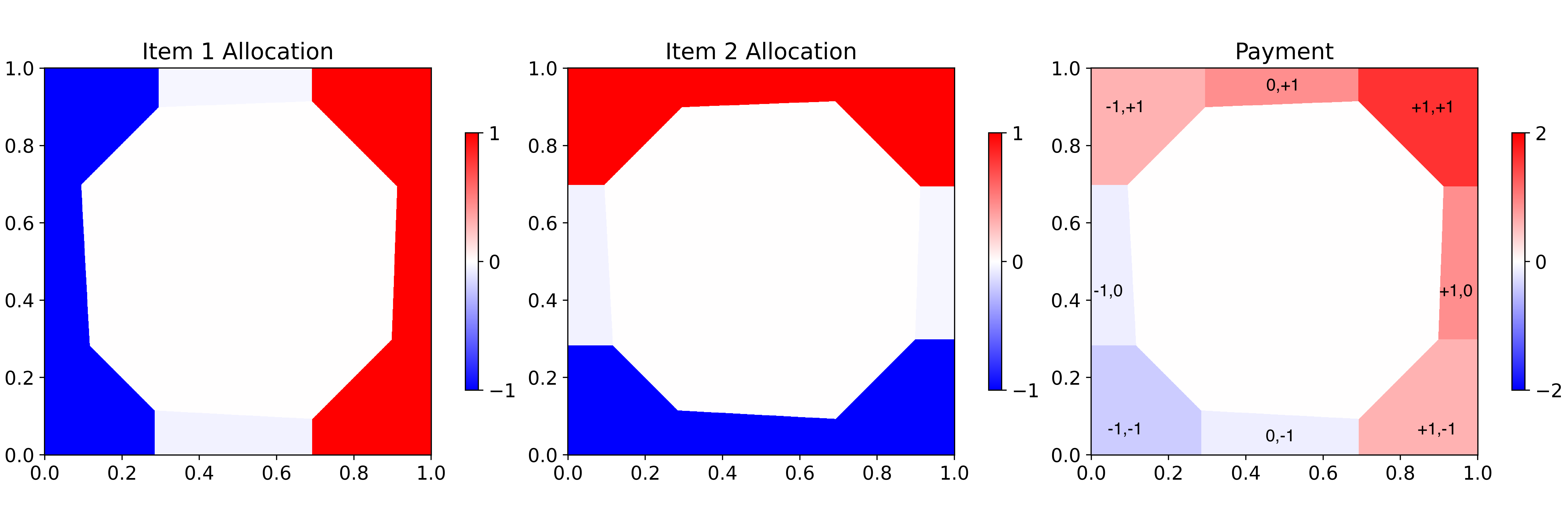}
    \includegraphics[width=0.9\textwidth]{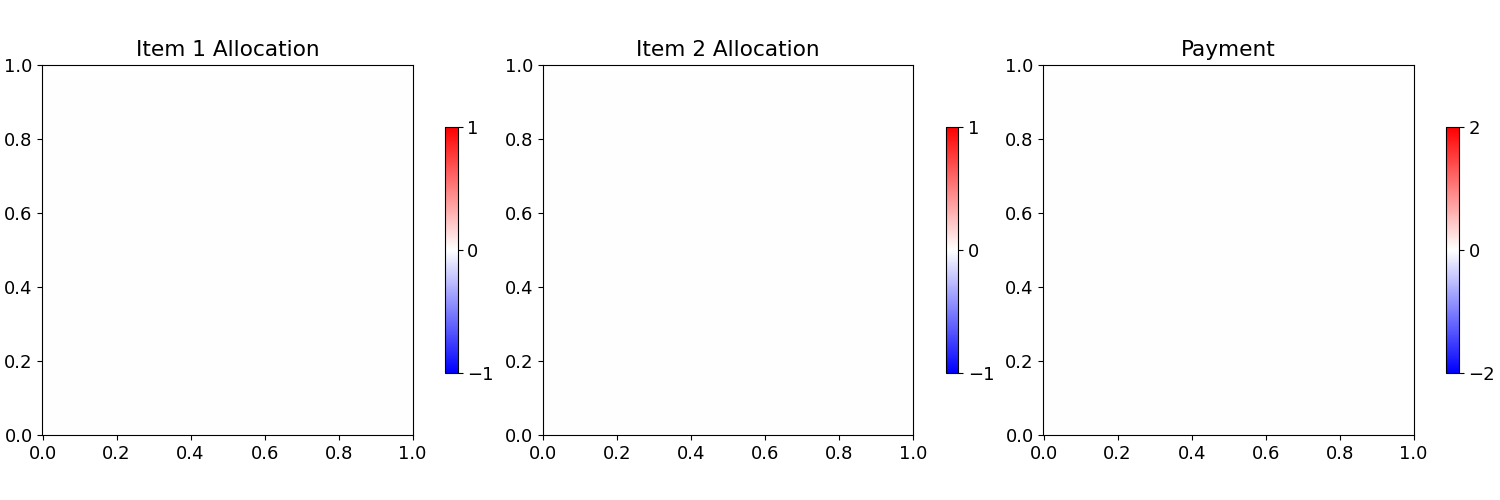}
    \caption{Learned allocation and payment rules for two kinds of goods
    and $c=(\frac{1}{2}, \frac{1}{2})$, where trader valuations are distributed uniformly. Top to bottom, $\lambda = 0$ (no adverse selection), $\lambda=\frac{1}{2}$, and $\lambda=1$ (full adverse selection, so no trade is desirable, hence the blank plot). The axes of the plot are the trader's valuation for either good. Each distinct region is associated with a specific menu item; these menu items are marked on the payment rule plots. %
    \label{fig:noiserochet}}
\end{figure}

\subsubsection{Conjecture for optimal mechanism}

Consider the case where $c=(\frac{1}{2}, \frac{1}{2})$. We first train a model to maximize profit; the allocation rules for the mechanism after training are shown in the top row of Fig. \ref{fig:noiserochet}.
We use the learned model, as well as our knowledge of the problem setting, to conjecture the structure of the optimal mechanism. 

\begin{quote}
{\em The mechanism should be deterministic, and offer eight menu items in addition to the no-trade option (every combination of buying or selling a full unit of the items, or neither). There should be a large octagonal no-trade region in the center. Also, because valuations are symmetric, the mechanism should be symmetric, despite minor asymmetries in the learned solution which can be ascribed to numerical imprecision.}
\end{quote}

\subsubsection{Constructing a dual certificate}

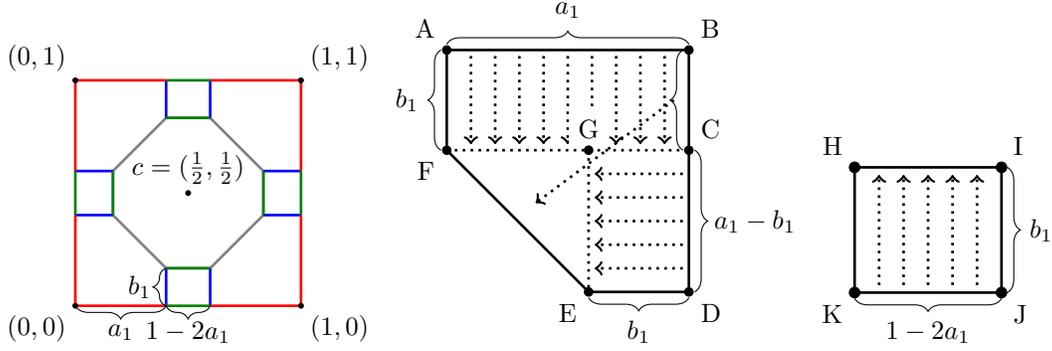
\begin{figure}
\centering
\begin{tikzpicture}[scale=3]

\coordinate (V) at (0.5, 0.5);

\coordinate (A) at (0,1);
\coordinate (B) at (1,1);
\coordinate (C) at (1,0);
\coordinate (D) at (0,0);

\coordinate (E) at ({\ra}, {1 - \rb});
\coordinate (F) at ({1 - \ra}, {1 - \rb});
\coordinate (G) at ({1 - \rb}, {1 - \ra});
\coordinate (H) at ({1 - \rb}, {\ra});
\coordinate (I) at ({1 - \ra}, {\rb});
\coordinate (J) at ({\ra}, {\rb});
\coordinate (K) at ({\rb}, {\ra});
\coordinate (L) at ({\rb}, {1 - \ra});

\coordinate (M) at ({\ra}, {1});
\coordinate (N) at ({1 - \ra}, {1});
\coordinate (O) at ({1}, {1 - \ra});
\coordinate (P) at ({1}, {\ra});
\coordinate (Q) at ({1 - \ra}, {0});
\coordinate (R) at ({\ra}, {0});
\coordinate (S) at ({0}, {\ra});
\coordinate (T) at ({0}, {1 - \ra});

\draw (A) -- (B) -- (C) -- (D) -- cycle;
\draw [line width=1pt, gray] (E) -- (F) -- (G) -- (H) -- (I) -- (J) -- (K) -- (L) -- cycle;
\draw [line width=1pt, blue]
    (M) -- (E)
    (N) -- (F)
    (O) -- (G)
    (P) -- (H)
    (Q) -- (I)
    (R) -- (J)
    (S) -- (K)
    (T) -- (L);
\draw [line width=1pt, red] (A) -- (M)
    (N) -- (B)
    (B) -- (O)
    (P) -- (C)
    (C) -- (Q)
    (R) -- (D)
    (D) -- (S)
    (T) -- (A);
\draw [line width=1pt, ao(english)]
    (M) -- (N)
    (O) -- (P)
    (Q) -- (R)
    (S) -- (T)
    (E) -- (F)
    (G) -- (H)
    (I) -- (J)
    (K) -- (L);

\node [above left = 0pt] at (A) {$(0, 1)$};
\node [above right = 0pt] at (B) {$(1, 1)$};
\node [below right = 0pt] at (C) {$(1, 0)$};
\node [below left = 0pt] at (D) {$(0, 0)$};
\node [above = 0pt] at (V) {$c=(\frac{1}{2}, \frac{1}{2})$};
\filldraw (A) circle[radius=0.3pt];
\filldraw (B) circle[radius=0.3pt];
\filldraw (C) circle[radius=0.3pt];
\filldraw (D) circle[radius=0.3pt];
\filldraw (V) circle[radius=0.3pt];

\draw [decorate,decoration={brace,amplitude=5pt,mirror,raise=0ex}] (D) -- (R) node[midway,yshift=-1em]{$a_1$};
\draw [decorate,decoration={brace,amplitude=5pt,raise=0ex}] (R) -- (J) node[midway,xshift=-1em]{$b_1$};
\draw [decorate,decoration={brace,amplitude=5pt,mirror,raise=0ex}] (R) -- (Q) node[midway,yshift=-1em]{$1 - 2a_1$};

\end{tikzpicture}
\begin{tikzpicture}[scale=8]

\coordinate (A) at ({0}, {\ra});
\coordinate (B) at ({\ra}, {\ra});
\coordinate (C) at ({\ra}, {\ra - \rb});
\coordinate (D) at ({\ra}, {0});
\coordinate (E) at ({\ra - \rb}, {0});
\coordinate (F) at ({0}, {\ra - \rb});
\coordinate (G) at ({\ra - \rb}, {\ra - \rb});

\draw [line width = 1pt] (A) -- (B) -- (D) -- (E) -- (F) -- cycle;
\draw [line width = 1pt, dotted] (F) -- (C);
\draw [line width = 1pt, dotted] (G) -- (E);

\foreach \i in {1,...,9}
{
    \draw [->, dotted, line width = 1pt] ({\ra / 10 * \i}, {\ra - 0.01}) -- ({\ra / 10 * \i}, {\ra - \rb + 0.01});
}
\foreach \i in {1,...,5}
{
    \draw [->, dotted, line width = 1pt] ({\ra - 0.01}, {(\ra - \rb) / 6 * \i}) -- ({\ra - \rb + 0.01}, {(\ra - \rb) / 6 * \i});
}

\draw [decorate,decoration={brace,amplitude=5pt,mirror,raise=0.5ex}] (B) -- (C) node[midway,yshift=-1em]{};
\draw [->, dotted, line width = 1pt] ({\ra - 0.025}, {\ra - \rb / 2}) -- ({(\ra - \rb) / 2 + 0.03}, {(\ra - \rb) / 2 + 0.03});

\node [above left = 1pt] at (A) {A};
\filldraw (A) circle[radius=0.2pt];
\node [above right = 1pt] at (B) {B};
\filldraw (B) circle[radius=0.2pt];
\node [above right = 1pt] at (C) {C};
\filldraw (C) circle[radius=0.2pt];
\node [below right = 1pt] at (D) {D};
\filldraw (D) circle[radius=0.2pt];
\node [below left = 1pt] at (E) {E};
\filldraw (E) circle[radius=0.2pt];
\node [below left = 1pt] at (F) {F};
\filldraw (F) circle[radius=0.2pt];
\node [above = 1pt, fill=white] at (G) {G};
\filldraw (G) circle[radius=0.2pt];

\draw [decorate,decoration={brace,amplitude=5pt,raise=0.5ex}] (A) -- (B) node[midway,yshift=1.5em]{$a_1$};
\draw [decorate,decoration={brace,amplitude=5pt,mirror,raise=0.5ex}] (A) -- (F) node[midway,xshift=-1.5em]{$b_1$};
\draw [decorate,decoration={brace,amplitude=5pt,raise=0.5ex}] (C) -- (D) node[midway,xshift=2.5em]{$a_1 - b_1$};
\draw [decorate,decoration={brace,amplitude=5pt,mirror,raise=0.5ex}] (E) -- (D) node[midway,yshift=-1.5em]{$b_1$};

\end{tikzpicture}
\begin{tikzpicture}[scale=10]

\coordinate (H) at ({0}, {\rb});
\coordinate (I) at ({1 - 2 * \ra}, {\rb});
\coordinate (J) at ({1 - 2 * \ra}, {0});
\coordinate (K) at ({0}, {0});

\draw [line width = 1pt] (H) -- (I) -- (J) -- (K) -- cycle;

\foreach \i in {1,...,5}
{
    \draw [->, dotted, line width = 1pt] ({(1 - 2 * \ra) / 6 * \i}, {0 + 0.01}) -- ({(1 - 2 * \ra) / 6 * \i}, {\rb - 0.01});
}

\node [above left = 1pt] at (H) {H};
\filldraw (H) circle[radius=0.2pt];
\node [above right = 1pt] at (I) {I};
\filldraw (I) circle[radius=0.2pt];
\node [below right = 1pt] at (J) {J};
\filldraw (J) circle[radius=0.2pt];
\node [below left = 1pt] at (K) {K};
\filldraw (K) circle[radius=0.2pt];

\draw [decorate,decoration={brace,amplitude=5pt,raise=0.5ex}] (J) -- (K) node[midway,yshift=-1.5em]{$1 - 2 a_1$};
\draw [decorate,decoration={brace,amplitude=5pt,raise=0.5ex}] (I) -- (J) node[midway,xshift=1.5em]{$b_1$};

\end{tikzpicture}
\caption{Left: Partition of the unit square that corresponds to the optimal transport solution to the dual problem in the two-item case, for the specific values $\lambda=1$ (pure noise trading) and $c=(\frac{1}{2}, \frac{1}{2})$. The red edges are of the same length, denoted by $a_1$, while the blue edges are of the same length as well, denoted by $b_1$. The dark green edges share the same length $1 - 2a_1$. Right: A visualization of the optimal transport solution for the partitioned pentagon-shaped and rectangle regions in the two-item case, for the specific values $\lambda=1$ (pure noise trading) and $c=(\frac{1}{2}, \frac{1}{2})$.
\label{fig:noise_trading_partition_and_plan}}
\end{figure}

We now show how to construct a dual certificate and explicitly calculate its transport cost. For clarity, we show here the ``noise trading'' case where $c=(\frac{1}{2}, \frac{1}{2})$ and $\lambda=1$. We also develop
 the optimal mechanism design parameterized by $\lambda$ when $c=(\frac{1}{2}, \frac{1}{2})$. The calculations for this are similar but uglier, and  relegated to~\Cref{app:linearbelief}.

\paragraph{Partitioning the type space} The solution of the optimal transport dual problem can be illustrated using an optimal partition of the unit square so that the minimal total transport cost is the sum of the minimal transport cost for each of the partitioned regions. Fig.~\ref{fig:noise_trading_partition_and_plan}, left, shows the optimal partition in this case. In the optimal transport solution, the $+1$ point mass at $c=(\frac{1}{2}, \frac{1}{2})$, according to our definition of $\succeq$, can be spread outward (away from $c$) to uniformly cover the octagon-shaped region at the center, at zero cost. In addition to the octagon, there are four equivalent rectangles and four equivalent pentagons. Each of the rectangles and pentagons contains a part of the four outer edges of the unit square, and the positive mass on that part of the edge gets distributed uniformly over the rectangle or pentagon to cancel out the negative mass. The total transport cost is therefore
\(
    C_1 = 4 C_1^{(R)} + 4 C_1^{(P)},
\)
where $C_1^{(R)}$ is the transport cost of one rectangle, $C_1^{(P)}$ is the transport cost of one pentagon, and the subscript $1$ represents that these quantities are for the case of $\lambda = 1$.

\paragraph{Transport plan} We parameterize the partition in terms of lengths $a_1$ and $b_1$. In terms of these quantities, the transport cost for each rectangle is 
\(
    C_1^{(R)} = \int_{0}^{b_1} \left(\frac{1-2a_1}{2}\right) \frac{1}{b_1} x \mathrm{d}x = \frac{(1-2a_1)b_1}{4},
\)
while the total transport cost for the pentagon is $C_1^{(P)} = \frac{(5a_1 + b_1) b_1}{6}$
(see the detailed calculations in Appendix~\ref{app:2dnoise}).
The actual transport plans are visualized in Fig. \ref{fig:noise_trading_partition_and_plan}, right and  involve moving the positive mass along the edges uniformly inward until the boundary of the no-trade region is reached.

\paragraph{Solving for $a_1$ and $b_1$}
The structure of the transport plan constrains the values of $a_1$ and $b_1$.
There is a positive mass on edge JK of length $1-2a_1$, and as the mass density per unit length of the edge is $\frac{1}{2}$, the mass is $\frac{1-2a_1}{2}$. This mass is moved upward to be distributed uniformly over the rectangle HIJK and should perfectly cancel out the negative mass in the rectangle. 
The density of the initial negative mass over the rectangle HIJK is $-3 (1-2 a_1) b_1$, and because it should perfectly cancel out the positive mass $\frac{1-2 a_1}{2}$, we have
\begin{align}
    \frac{1 - 2a_1}{2} -3(1 - 2a_1)b_1 = 0.
\label{eq:noisetrading_ab_eq1}
\end{align}
The positive mass on edge BC is $\frac{b_1}{2}$, and the negative mass on triangle EFG is $-\frac{3(a_1-b_1)^2}{2}$. As these two masses should perfectly cancel out each other, we have
\begin{equation}
    \frac{b_1}{2} - \frac{3(a_1-b_1)^2}{2} = 0.
\label{eq:noisetrading_ab_eq2}
\end{equation}
Similarly, the positive mass on edge CD is $\frac{a_1 - b_1}{2}$, and the negative mass on rectangle CDEG is $-3(a_1-b_1)b_1$. As these two masses should perfectly cancel out, we have
\begin{equation}
    \frac{a_1 - b_1}{2} - 3(a_1-b_1)b_1 = 0.
\label{eq:noisetrading_ab_eq3}
\end{equation}
Combining equations Eq.~\ref{eq:noisetrading_ab_eq1}, Eq.~\ref{eq:noisetrading_ab_eq2}, and Eq.~\ref{eq:noisetrading_ab_eq3}, the values of $a_1$ and $b_1$ can be solved: \( a_1 = \frac{1+\sqrt{2}}{6}, b_1 = \frac{1}{6}\).

\paragraph{Computing the menu}
The boundary between two regions is the point where a trader is indifferent between the two menu items. At any trader type on a boundary, we can easily calculate the welfare of the two neighboring allocations, and infer the price that would make the trader indifferent.
The resulting menu is given in Table \ref{tab:optimalnoisemech} and visualized in Fig. \ref{fig:optimalnoisemech}. 
The total transport cost for this pure noise trading case is 
\begin{equation}
    C_1 = 4 C_1^{(R)} + 4 C_1^{(P)} = 4 \cdot \frac{(1-2a_1)b_1}{4} + 4 \cdot \frac{(5a_1 + b_1) b_1}{6} = \frac{1}{27} \left(6+\sqrt{2}\right) \approx 0.274601,
\end{equation}
which exactly matches the expected profit of the optimal menu, so our solutions are optimal.

\begin{figure}
    \centering
    \includegraphics[width=0.9\textwidth]{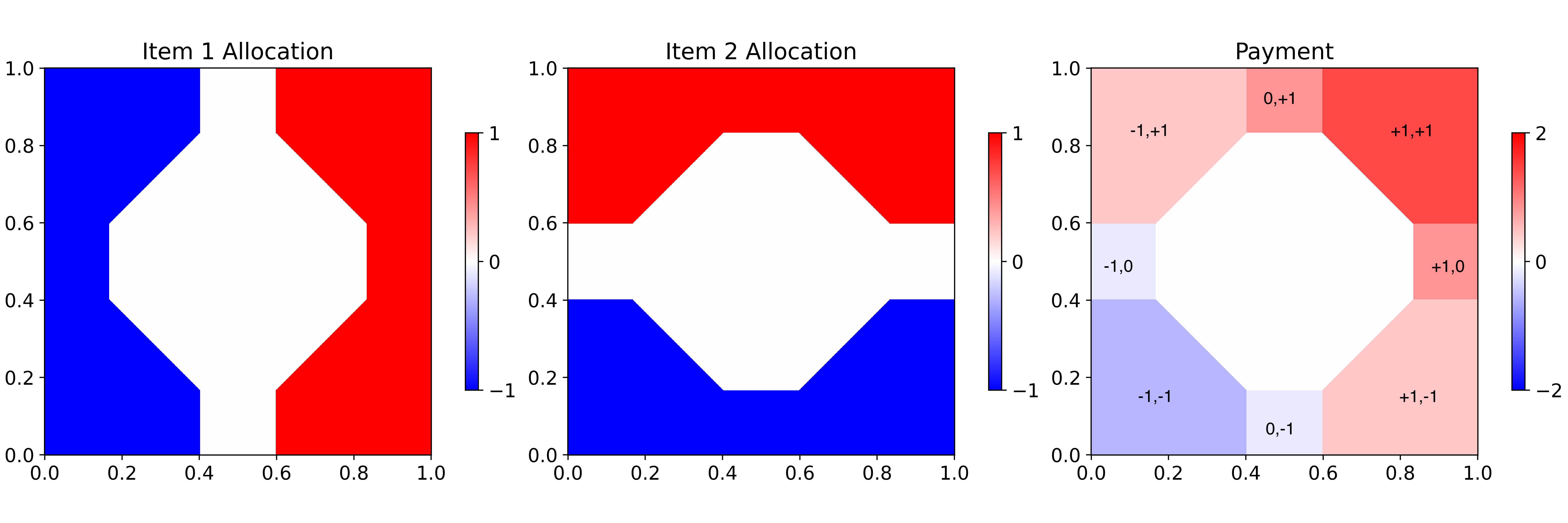}
    \caption{A plot of the %
    allocation and payment rules for the true optimal mechanism under the noise trading ($\lambda=1$) model ($\pi(c, x) = c$) for $c=(\frac{1}{2}, \frac{1}{2})$. Each distinct region is associated with a menu item; these are marked on the payment plot. Compare to the learned mechanism in \Cref{fig:noiserochet}, top.
    \label{fig:optimalnoisemech}}
\end{figure}

\begin{table}[h]
  \centering
  \caption{The optimal menu for the two good, uniform valuation, noise trading (\(\lambda = 1\)) case with \( c=(\frac{1}{2}, \frac{1}{2}). \) }
  \begin{tabular}{@{}l|lllllllll@{}}
    \toprule
    Menu (Item 1) & +1 & 0 & -1 & 0 & +1 & -1 & +1 & -1 & 0\\
    Menu (Item 2) & 0 & +1 & 0 & -1 & +1 & -1 & -1 & +1 & 0\\
    Price & \( \frac{5}{6} \) & \( \frac{5}{6} \) & \( -\frac{1}{6} \) & \( -\frac{1}{6} \) & \( \frac{10-\sqrt{2}}{6} \) & \( \frac{-2-\sqrt{2}}{6} \) & \( \frac{4-\sqrt{2}}{6} \) & \( \frac{4-\sqrt{2}}{6} \) & 0 \\
    \bottomrule
  \end{tabular}
  \label{tab:optimalnoisemech}
\end{table}
\subsubsection{Adverse selection in two dimensions}
As mentioned, the above calculations are for the special case where $\lambda = 1$, i.e. no adverse selection. We also consider the more interesting case of the linear belief updating objective described above, 
for an initial valuation of $c=(\frac{1}{2}, \frac{1}{2})$.
The calculations are similar, though more involved, and are deferred to Appendix~\ref{app:linearbelief}.
The optimal mechanism in this case, as a function of the belief-updating parameter, $\lambda$, is given in Table \ref{tab:noisetradinglambda}.

\begin{table}[h]
\centering
\caption{The optimal menu for the two good,  uniform valuation case with \(c=(\frac{1}{2}, \frac{1}{2})\), for any strength of adverse selection $\lambda$.}
\begin{tabular}{@{}l|lllllllll@{}}
\toprule
Menu (Item 1) & 1 & 0 & -1 & 0 & 1 & -1 & 1 & -1 & 0 \\
Menu (Item 2) & 0 & 1 & 0 & -1 & 1 & -1 & -1 & 1 & 0 \\
Payment & \( \frac{3 \lambda +2}{4 \lambda +2} \) & \( \frac{3 \lambda +2}{4 \lambda +2} \) & \( -\frac{\lambda }{4 \lambda +2} \) & \( -\frac{\lambda }{4 \lambda +2} \) & \( \frac{-\sqrt{2} \lambda +6 \lambda +4}{4 \lambda +2} \) & \( -\frac{\left(2+\sqrt{2}\right) \lambda }{4 \lambda +2} \) & \( \frac{-\sqrt{2} \lambda +2 \lambda +2}{4 \lambda +2} \) & \( \frac{-\sqrt{2} \lambda +2 \lambda +2}{4 \lambda +2} \) & 0 \\
\bottomrule
\end{tabular}
\label{tab:noisetradinglambda}
\end{table}

\subsubsection{Discussion of economic properties of optimal mechanisms}
We now qualitatively discuss the mechanism shown in \Cref{tab:noisetradinglambda}.
There is a no-trade octagon in the center. Even with no adverse selection, this octagon is present, but as adverse selection grows stronger, it grows larger until eventually no trade is possible.
The mechanism engages in mixed bundling: that is, for any combination of buying and selling both goods, there is an offer, as well as distinctly priced offers for buying or selling any individual good. In particular, bundles can involve buying one good and selling another, which we can interpret as taking payment ``in kind.''
The intuition for why this is useful is as follows: considering only the single-good offers in \Cref{tab:noisetradinglambda}, the bid-ask spread is wider than in the optimal mechanism discussed in \Cref{sec:warmup1d}, so when single-good trades take place, they are more profitable, but a larger proportion of traders refuse to participate. However, some of these traders are then won back by offering a discount for the multiple good bundles, so profit is higher on the whole.

We also highlight that the profit gap between this optimal mechanism, and the mechanism consisting of applying the prices from \Cref{sec:warmup1d} separately to each item, is $\frac{\lambda ^3 \left(\left(2 \sqrt{2}-3\right) \lambda +2 \sqrt{2}\right)}{6 (\lambda +1) (2 \lambda +1)^2}$.
For $\lambda=1$ this gives about a 10\% increase in profit; for $\lambda=\frac{\sqrt{2}}{3}$ the increase is highest at 11.4\% -- a fairly substantial improvement in profit.

\subsection{Further Conjectures via Differentiable Economics}

We have presented a family of optimal market maker 
mechanisms for independent, uniformly distributed items, parameterized by the strength of adverse selection.
These mechanisms make use of mixed bundling behavior and we are able to construct dual certificates establishing optimality.

To showcase the flexibility of differentiable economics, we also train models on a much wider range of valuation distributions.
The resulting mechanisms display an interesting range of behavior and 
we conjecture  they are optimal.

\paragraph{Asymmetric noise trading case}
The above examples all considered the 
setting where the initial belief of the market maker is 
centered at $(\frac{1}{2}, \frac{1}{2})$, resulting in a symmetric mechanism.
We also consider the case where the initial belief is centered at $(\frac{1}{3}, \frac{1}{3})$ (again for the case where the type distribution is uniform).
The result of training a mechanism is shown in Figure \ref{fig:onethirdrochet}. Based on this solution, and our knowledge of the dual problem as well as knowledge of the points where bidders must be indifferent between two items, we conjecture (calculations in \Cref{app:offcenterbelief}) an analytic form for this mechanism, whose menu is shown in Table \ref{tab:noisetradingonethird}.%
Here, we observe qualitatively different market maker 
behavior:
 note that the items are bought separately or together (i.e. mixed bundling when the mechanism buys), or bundled with the sale of the other item, but not sold separately (i.e., 
if the trader only wants to purchase, the only choice is the grand bundle).
\begin{figure}
    \centering
    \includegraphics[width=0.9\textwidth]{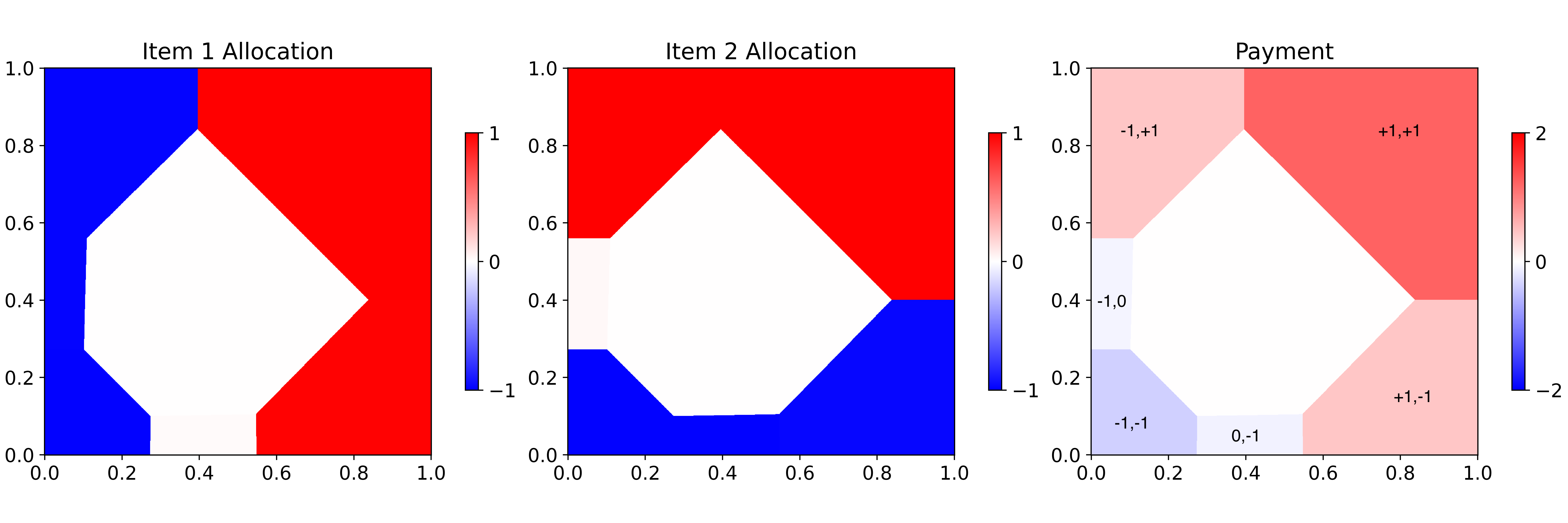}
    \caption{Learned  allocation and payment rules for $c=(\frac{1}{3}, \frac{1}{3})$ for a uniform distribution with no adverse selection ($\lambda=1$). Each distinct region is associated with a menu item; these are marked on the payment plot. A conjectured optimal menu is in \Cref{tab:noisetradingonethird}.
    \label{fig:onethirdrochet}}
\end{figure}

\begin{table}
\caption{A conjectured optimal menu for the adverse selection case for initial valuation $c=(\frac{1}{3}, \frac{1}{3})$, without adverse selection ($\lambda=1$). Unlike the mechanism family in \Cref{tab:noisetradinglambda}, this mechanism never offers to sell either good separately.}
    \begin{tabular}{ccc}
\hline
Menu (Item 1) & Menu (Item 2) & Payment \\
\hline
-1 & 0 & $-\frac{1}{9} \approx -0.111$ \\
0 & -1 & $-\frac{1}{9} \approx -0.1111$ \\
-1 & -1 & $\frac{1}{9}(-2 - \sqrt{2}) \approx -0.3794$\\
1 & -1 & $\frac{2}{63} \left(-\sqrt{2 \left(67-41 \sqrt{2}\right)}-9 \sqrt{2}+\sqrt{2 \left(155+107 \sqrt{2}\right)}+6\right) \approx 0.3143$ \\
-1 & 1 & $\frac{2}{63} \left(-\sqrt{2 \left(67-41 \sqrt{2}\right)}-9 \sqrt{2}+\sqrt{2 \left(155+107 \sqrt{2}\right)}+6\right) \approx 0.3143$ \\
1 & 1 &$\frac{2}{63} \left(\sqrt{2 \left(67-41 \sqrt{2}\right)}-15 \sqrt{2}+\sqrt{2 \left(155+107 \sqrt{2}\right)}+31\right) \approx 1.2413$\\
0 & 0 & $0$ \\
\hline
\end{tabular}
\label{tab:noisetradingonethird}
\end{table}

\paragraph{A three dimensional mechanism}

\begin{figure}
\centering
\includegraphics[width=0.4\textwidth]{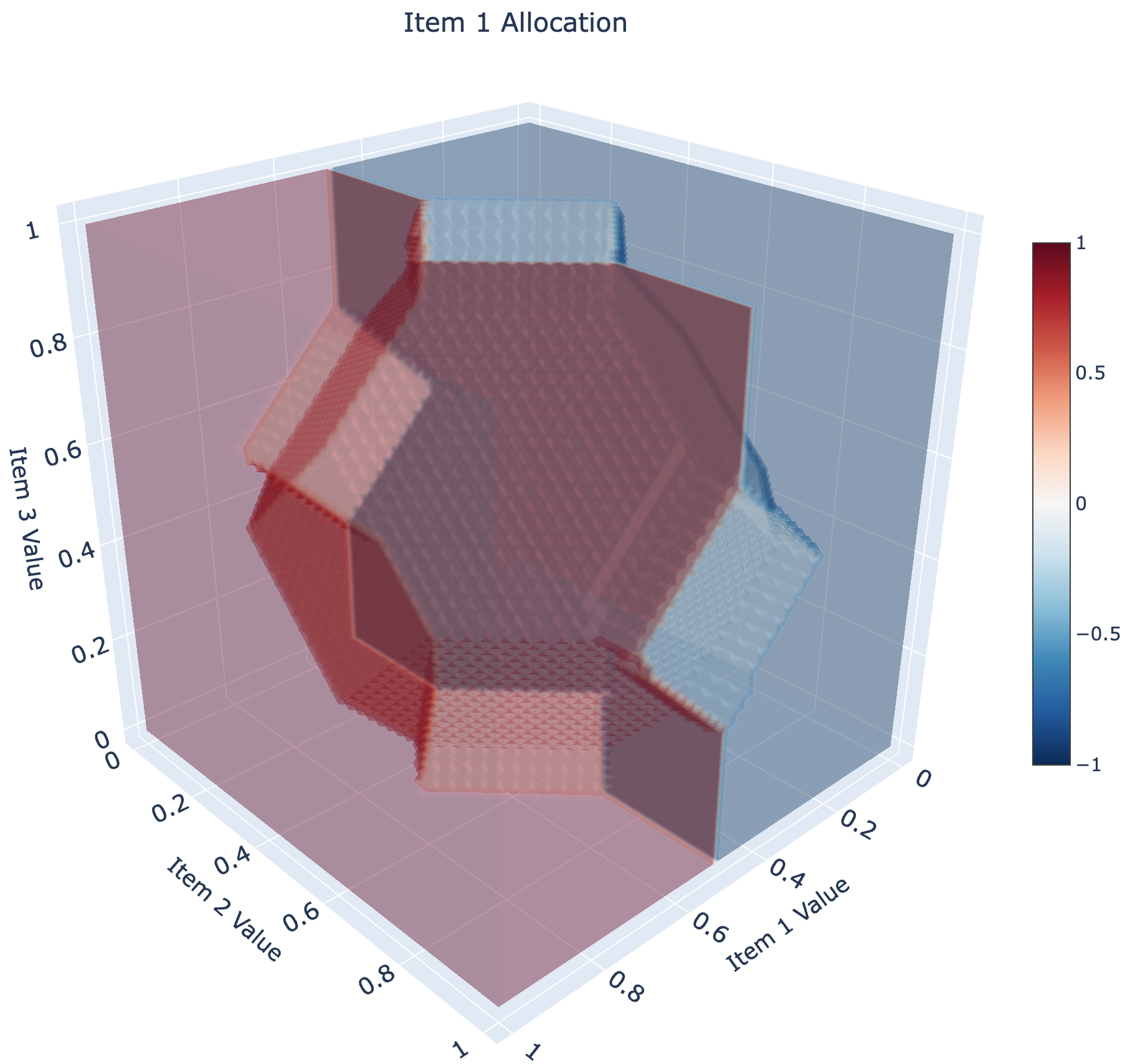}
\includegraphics[width=0.4\textwidth]{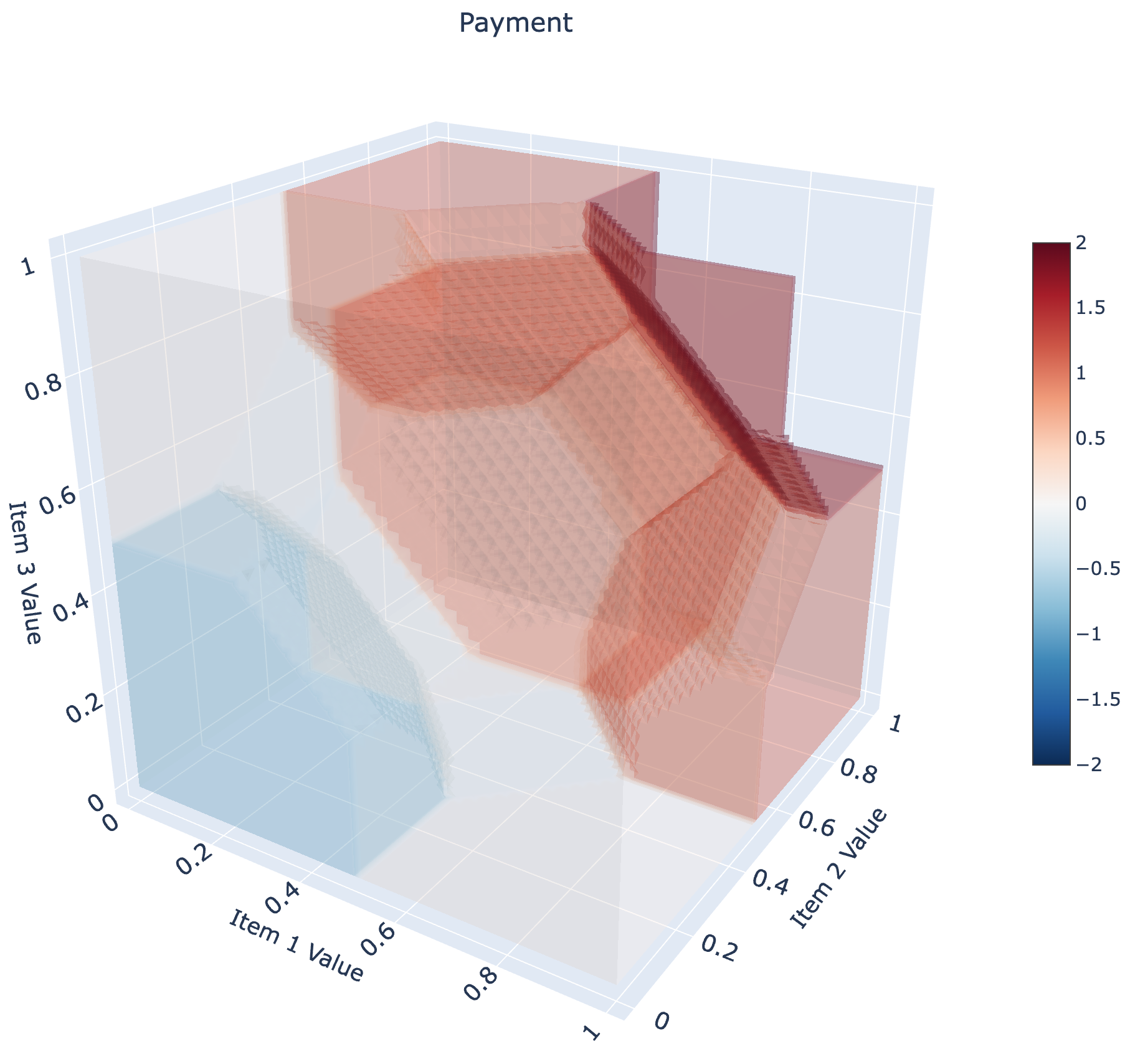}
    \caption{A conjectured optimal mechanism for the uniform distribution with $c=(\frac{1}{2}, \frac{1}{2}, \frac{1}{2})$,
three goods, and no adverse selection.
This visualization shows a density plot of the allocation rule for one
of the goods (the rule for the other goods are similar), and the payment.
    \label{fig:threed}}
\end{figure}
So far we have focused on two-good settings, in which optimal mechanisms already display interesting behavior. Our problem setting is well-defined for any number of goods and we also train a 
{\em three-good mechanism}, as visualized in Figure~\ref{fig:threed}. Two dimensional cross-sectional slices for much of the type space have the same structure as the optimal mechanisms in \Cref{ssec:uniform2d}, although towards the boundaries of the type space the structure changes. The result has extremely interesting and complex structure---a no-trade truncated octahedron (a 14-face polyhedron) in the center, and bundling of all combinations of three goods (buying and selling), along with separate offers for individual goods, but no bundles of two goods.

\paragraph{Beta distribution: infinite menu size}

In the multiple-good monopolist problem, it is well-known that optimal mechanisms under a beta distribution may show infinite menu size~\cite{Daskalakis2017Strong}; in other words, the convex utility function may not be piecewise linear. We train RochetNet under the no-adverse-selection model with $c=(\frac{1}{2}, \frac{1}{2})$, on both Beta(2,1) and Beta(2,2) distributions. Shown in Figure \ref{fig:rochetbeta} and Figure \ref{fig:rochetbetasymmetric}, the resulting allocation rules approximate a continuously varying allocation. For the asymmetric distribution, the mixture only takes place for types that have relatively high valuations in one of the goods.

\begin{figure}
    \centering
    \includegraphics[width=0.9\textwidth]{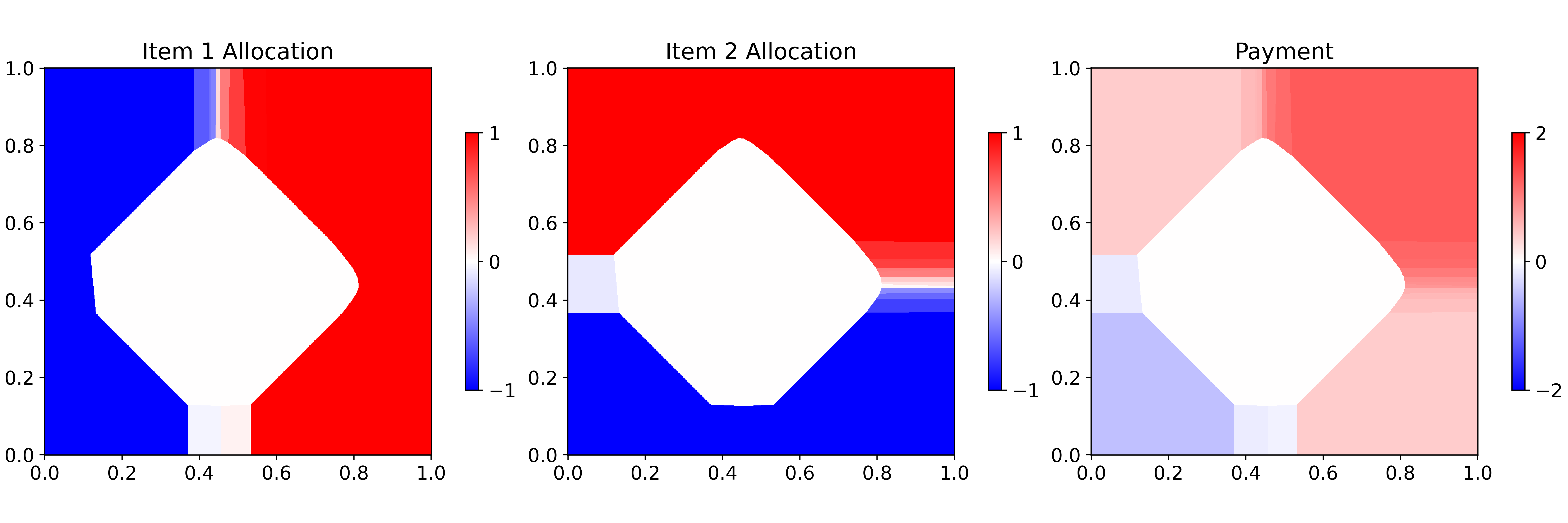}
    \caption{Learned allocation and payment rules for a Beta(1, 2) distribution with $c=(\frac{1}{2}, \frac{1}{2})$ and no adverse selection ($\lambda=1$). There are regions of continuously varying allocation and payment, visible around roughly $x=0.5$ or $y=0.5$, whenever the other valuation is high enough.
    \label{fig:rochetbeta}}
\end{figure}

\begin{figure}
    \centering
    \includegraphics[width=0.9\textwidth]{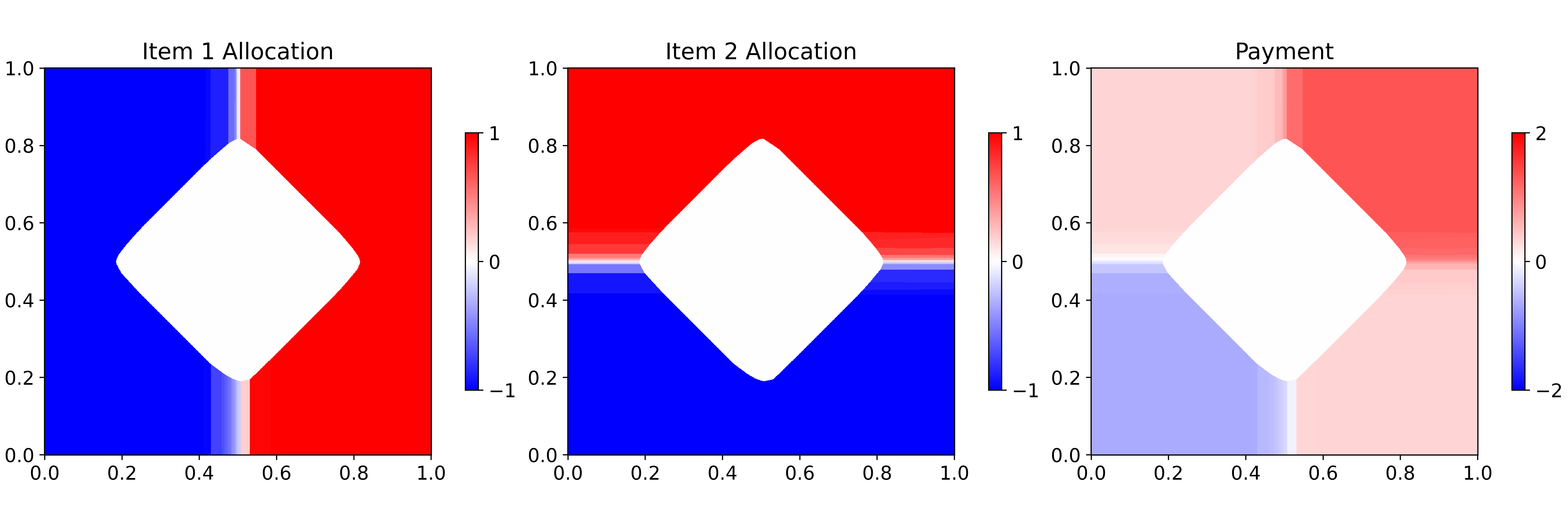}
    \caption{Learned  allocation and payment rules for a symmetric Beta(2, 2) distribution with $c=(\frac{1}{2}, \frac{1}{2})$ and no adverse selection ($\lambda=1$). Here, the mechanism is symmetric, with continuously varying allocations and payments visible around where $x=0.5$ or $y=0.5$ on all sides.
    \label{fig:rochetbetasymmetric}}
\end{figure}

\paragraph{Truncated normal distribution: an interestingly boring mechanism}

We also consider a normal distribution centered and truncated to fit within the type space.
Here, the learned mechanism, shown in \Cref{fig:rochettruncnorm}, is interesting because unlike the other mechanisms we have presented, it is relatively simple.
It is essentially a ``grand bundling'' strategy, making an all-or-nothing offer of the bundle of all goods for some fixed price, albeit with four
 grand bundles, one for each combination of buying and selling.
\begin{figure}
    \centering
    \includegraphics[width=0.9\textwidth]{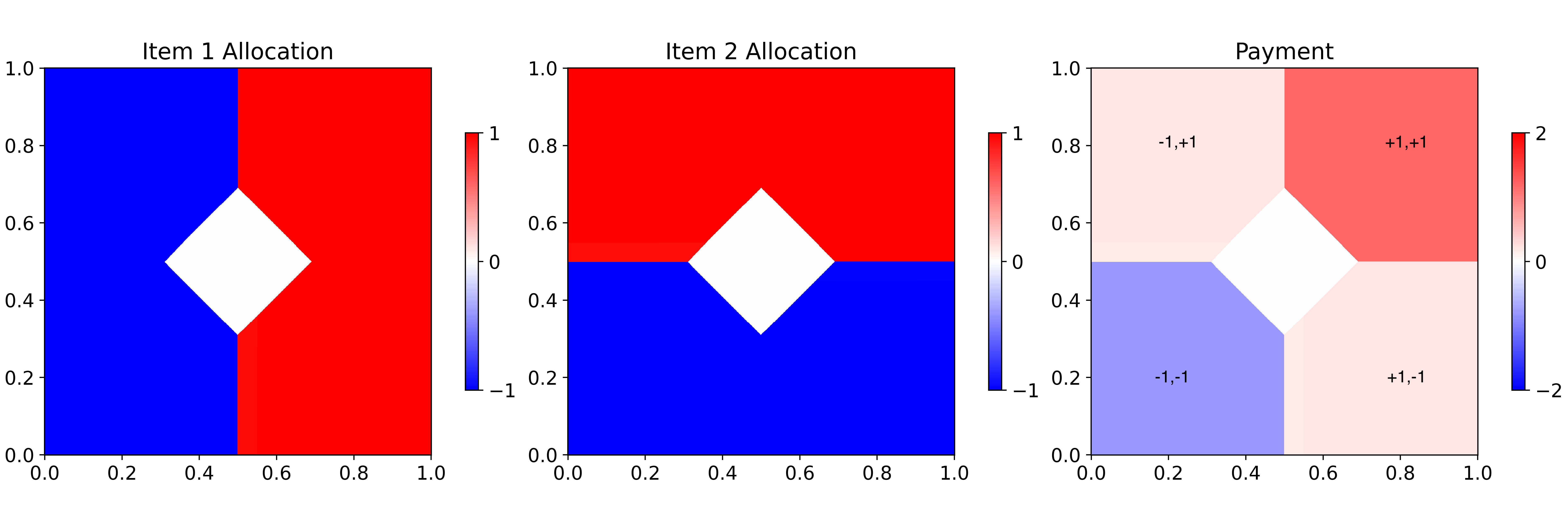}
        \caption{Learned  allocation and payment rules for a truncated normal distribution with mean $\frac{1}{2}$ and standard deviation $\frac{1}{8}$, for $c=(\frac{1}{2}, \frac{1}{2})$. Each distinct region is associated with a menu item; these are marked on the payment plot. This mechanism is quite simple: it does not make any offers of single goods, only trading in bundles.
    \label{fig:rochettruncnorm}}
\end{figure}

\section{Future Generalizations and Extensions}

Here we detail some immediate directions for the generalization of our approach, as well as possible future applications of our techniques.

\paragraph{General constraints on allocations} \citet{Kash2016Optimal} show how to design optimal auctions with fairly arbitrary constraints on the allocation set, such as  requiring deterministic allocations, enforcing minimum quantities of goods to be sold, and many others.
Our computational tools can support such constraints, and an extension of the strong duality results would likely be relatively straightforward.
In particular, if offers must take place in discrete ``lots,'' such an approach could be useful.

\paragraph{Other mechanism design objectives} We focus on a particular adverse selection model. It is relatively general and plausible: one can drop in new update rules $\pi$ with few restrictions~\cite{MilionisMyersonianFrameworkOptimal2023}.
But it is not the only reasonable model of adverse selection, and profit under adverse selection is not the only goal for AMM design.
Our duality results all still hold for any other objective, as long as it can be linearized; i.e.,
converted into an integral against a signed measure $\mu$.
The computational techniques we use also work without modification in this case, so the theoretical and practical machinery here should directly work for many other mechanism design goals.
\paragraph{Trading over time and CFMMs} As discussed in \Cref{app:cfmm}, there is a formal identification between CFMMs and the mechanisms we study, so that any of these mechanisms could be deployed as a CFMM that accepts multiple trades over time. However, after the first trade, the mechanism would no longer be optimal, and re-optimizing after each trade would result in a mechanism that might be vulnerable to roundtrip arbitrage. Optimizing a fixed CFMM under a dynamic model, or optimizing within some space of time-varying CFMMs which still guarantee no arbitrage, is a compelling direction for future work.

\section{Conclusion}

We analyze market-making across multiple goods in the face of adverse selection.
In the settings we study, the market maker has the capacity to buy and sell different bundles at different prices, as well as to accept sales of goods ``in kind'' in lieu of money.
We show that the problem of maximizing profit under adverse selection is dual to an optimal transport problem.
Using our framework, we produce dual certificates for known single-good mechanisms
 and then show how to use differentiable economics to search through the space of mechanisms in settings with multiple goods, in order to generate conjectures for optimal mechanisms.
Based on some of these conjectures, we derive optimal mechanisms and dual certificates.

From an economics perspective, our results establish that in some cases, the \textit{optimal} mechanism 
 for market makers who work across multiple goods must involve making use of their capacity to bundle and to accept goods ``in kind'', resulting in significant improvements in profit.
Further conjectured optimal mechanisms in a range of other cases which show a variety of additional complex behaviors including offers of a continuous spectrum of allocations.
This is by contrast with the single-good case, where a much simpler bid-ask spread is optimal.
It is common in practice to restrict market makers to these simple mechanisms, even when there are multiple goods; our model and results suggest this imposes a cost on them.

From a methodological perspective, we provide theoretical results for another area in the relatively under-explored field of multi-parameter mechanism design, and we show how differentiable economics can be used to guide theory in otherwise very challenging settings. We also hope that these techniques can lead to the design of better market-making mechanisms which will actually be deployed.

\bibliography{references}
\appendix
\section{Connection to constant-function market makers}
\label{app:cfmm}

\citet{Angeris2021Replicating} defines constant-function market makers in terms of a concave \textit{portfolio value function} and a conjugate \textit{trading function}. The portfolio value function $V(x)$ is the value that the market maker expects to retain in their portfolio after an arbitrageur, who knows the correct price vector $x$, engages in a trade. The trading function $\Psi_V(R)$ remains constant for any portfolio that could be the result of a feasible trade; $\Psi_V(R) = -\sup_x (-x^T R - (-V)(x)) = -(-V)^*(-R)$.
\citet{Angeris2021Replicating} require that $V(x)$ is 1-homogeneous; rather than treating payments separately, one good is treated as the numeraire with fixed value normalized to 1. Because $V(x)$ is 1-homogeneous, $\Psi_V(R) = -(-V)^*(-R)$ represents the negative indicator function for a convex set of feasible reserves~\cite{Rockafellar2015Convex}.

Thus there is a close connection between the portfolio value and trading functions, and our formulation of a trader utility $u(x)$ and a menu $u^*(a)$.
For some initial reserves $r_0$, $u(x) = x^T r_0 - V(x)$ (there is some surplus which, after a trade, will be divided between the market maker and trader).
When $u$ is 1-homogeneous and one good is chosen to be the numeraire, $u^*(a)$ will be either 0 or $\infty$ (denoting whether a trade is possible or not).
In our case, we instead allow the output $u^*(a)$ to denote a price for trade $a$, reducing the dimensionality of the valuation vector by 1.

Separately and similarly, \citet{Frongillo2023Axiomatic} shows a connection between CFMMs and scoring rules for prediction markets.

\section{Proofs of duality lemmas}
\label{app:linearize} %

First, we go from an expression for expected profit which involves $\nabla u(x)$ to one that only involves $u$, proving \Cref{lemma:byparts}. The calculation mainly involves repeated application of integration by parts, which we include in extreme detail in this appendix.

\begin{proof}[Proof of \Cref{lemma:byparts}]
First, we can start with the expression for expected profit and slightly rewrite it.
\begin{equation}
\begin{aligned}
    &\int_{\mathcal{X}} \left(\nabla u(x) \cdot (x - \pi(c,x)) - u(x) \right) f(x)\,dx \\
    = & \int_{\mathcal{X}} \left(\sum_{i=1}^n \left(\frac{\partial u}{\partial x_i}\right)  (x_i - \pi_i(c_i, x_i)) - u(x) \right) f(x)\,dx \\
    = & \sum_{i=1}^n \int_{\mathcal{X}} \left(\left(\frac{\partial u}{\partial x_i}\right)  (x_i - \pi_i(c_i, x_i))\right) f(x)\,dx - \int_{\mathcal{X}} u(x) f(x)\,dx.
    \end{aligned}
    \label{eq:beforeintegrationbyparts1}
\end{equation}

Now consider the first term in \ref{eq:beforeintegrationbyparts1}, and pick out the summand for any index $i$. We have 
\begin{equation}
\begin{aligned}
    & \int_{\mathcal{X}} \left(\left(\frac{\partial u}{\partial x_i}\right)  (x_i - \pi_i(c_i, x_i))\right) f(x)\,dx \\
    = & \int_{\mathcal{X}_{-i}} %
    \left( 
    \int_{\mathcal{X}_i} \frac{\partial u}{\partial x_i} x_i f(x)\,dx_{i} - \int_{\mathcal{X}_i} \frac{\partial u}{\partial x_i} \left(\pi_i(c_i,x_i) f(x)\right)\,dx_{i} 
    \right)
    dx_{-i}.
\end{aligned}
    \label{eq:insidesum1}
\end{equation}

Applying integration by parts to these terms, with $X_i = [0, b_i]$ we get
\begin{align*}
    \int_{\mathcal{X}_i} \frac{\partial u}{\partial x_i} x_i f(x)\,dx_{i} &= \left.u(x) x_i f(x)\right|^{x_i = b_i}_0 - \int_0^{b_i} u(x) \left[\frac{\partial}{\partial x_i} \left(x_i f(x)\right)\right]\,dx_i \\
    &= b_i u(b_i, x_{-i})  f(b_i, x_{-i}) - \int_0^{b_i} u(x)\left(x_i \frac{\partial f}{\partial x_i} + f(x) \right)\,dx_i.
\end{align*}
and likewise 
\begin{align*}
    \int_{\mathcal{X}_i} \frac{\partial u}{\partial x_i} \pi_i(c_i,x_i) f(x)\,dx_{i} =& \left.u(x)\pi_i(c_i, x_i) f(x)\right|^{x_i=b_i}_0 - \int_0^{b_i} u(x) \left[ \frac{\partial}{\partial x_i} (\pi_i(c_i, x_i) f(x)) \right]\,dx_i \\
    = &u(b_i, x_{-i}) \pi_i(c_i,b_i)f(b_i, x_{-i}) - u(0, x_{-i}) \pi_i(c_i, 0) f(0, x_{-i}) \\
    &- \int_0^{b_i} u(x) \left(\pi_i(c_i, x_i)\frac{\partial f}{\partial x_i}+f(x) \frac{\partial \pi_i(c_i, x_i)}{\partial x_i}\right)\,dx_i.
\end{align*}

 Putting these back into the expression in \ref{eq:insidesum1}, and distributing through the $\int_{X_{-i}}$, we end up with
\begin{dmath*}
    \int_{\mathcal{X}_{-i}} u(b_i, x_{-i})(b_i - \pi_i(c_i, b_i))f(b_i, x_{-i})\,dx_{-i} - \int_{\mathcal{X}_{-i}} -u(0, x_{-i})\pi_i(c_i, 0)f(0, x_{-i})\,dx_{-i}+ \int_{\mathcal{X}} (\pi_i(c_i,x_i) - x_i) u(x) \frac{\partial f}{\partial x_i}\,dx + \int_{\mathcal{X}} u(x) f(x)\left(\frac{\partial \pi_i(c_i, x_i)}{\partial x_i} - 1\right)\,dx.
\end{dmath*}
for any choice of index $i$.

Putting these terms back inside the sum in \ref{eq:beforeintegrationbyparts1}, we get
\begin{equation*}
    \begin{aligned}
    &\sum_{i=1}^n \int_{\mathcal{X}} \left(\left(\frac{\partial u}{\partial x_i}\right)  (x_i - \pi_i(c_i, x_i))\right) f(x)\,dx - \int_{\mathcal{X}} u(x) f(x)\,dx    \\ 
    =& \sum_{i=1}^n \left(\int_{\mathcal{X}_{-i}} u(b_i, x_{-i})(b_i - \pi_i(c_i, b_i))f(b_i, x_{-i})\,dx_{-i} - \int_{\mathcal{X}_{-i}} -u(0, x_{-i})\pi_i(c_i, 0)f(0, x_{-i})\,dx_{-i}\right) + \\&\sum_{i=1}^n \int_{\mathcal{X}} (\pi_i(c_i,x_i) - x_i) u(x) \frac{\partial f}{\partial x_i}\,dx + \sum_{i=1}^n \int_{\mathcal{X}} u(x) f(x)\left(\frac{\partial \pi_i(c_i, x_i)}{\partial x_i} - 1\right)\,dx - \int_{\mathcal{X}} u(x) f(x)\,dx \\
    =& ... + \sum_{i=1}^n\int_{\mathcal{X}} u(x) f(x)\left(\frac{\partial \pi_i(c_i, x_i)}{\partial x_i} - 1\right)\,dx - \int_{\mathcal{X}} u(x) f(x) \left(1\right)\,dx \\
        =& ... + \sum_{i=1}^n\int_{\mathcal{X}} u(x) f(x)\left(\frac{\partial \pi_i(c_i, x_i)}{\partial x_i}\right)\,dx - (n+1) \int_{\mathcal{X}} u(x) f(x)\,dx\\
    =&\sum_{i=1}^n \left(\int_{\mathcal{X}_{-i}} u(b_i, x_{-i})(b_i - \pi_i(c_i, b_i))f(b_i, x_{-i})\,dx_{-i} - \int_{\mathcal{X}_{-i}} -u(0, x_{-i})\pi_i(c_i, 0)f(0, x_{-i})\,dx_{-i}\right)+ \\ &\int_{\mathcal{X}} u(x)(\nabla f(x) \cdot (\pi(c,x) - x))\,dx +  \sum_{i=1}^n\int_{\mathcal{X}} u(x) f(x)\left(\frac{\partial \pi_i(c_i, x_i)}{\partial x_i}\right)\,dx  - (n+1) \int_{\mathcal{X}} u(x) f(x)\,dx\\
        =& \sum_{i=1}^n \left(\int_{\mathcal{X}_{-i}}  u(b_i, x_{-i})(b_i - \pi_i(c_i, b_i))f(b_i, x_{-i})\,dx_{-i} - \int_{\mathcal{X}_{-i}} -u(0, x_{-i})\pi_i(c_i, 0)f(0, x_{-i})\,dx_{-i}\right)+\\
        &\int_{\mathcal{X}} u(x)(\nabla f(x) \cdot (\pi(c,x) - x))\,dx + \int_{\mathcal{X}} u(x) f(x)\text{div}\pi(c,x)\,dx  - (n+1) \int_{\mathcal{X}} u(x) f(x)\,dx.
    \end{aligned}
\end{equation*}

There are four terms in this sum. The first term can be expressed as a surface integral around $\mathcal{X} = \prod_i [0, b_i]$.  Each of the remaining terms is an integral of $u(x)$ against some density, and we can combine these together. 

After this transformation, we can re-express the above as an integral $\int u\,d\mu'$ against a measure $\mu'$, where
\begin{equation*}
\begin{aligned}
        \mu'(A) &= \int_{\partial X} \mathbb{I}_A(x) f(x)(x - \pi(c, x))\cdot \hat{n}\,dx \\
        &- \int_{\mathcal{X}} \mathbb{I}_A(x) (\nabla f(x) \cdot (x - \pi(c,x)) 
         + (n+1 - \divv \pi(c,x))f(x))\,dx.
        \end{aligned}
\end{equation*}
This relationship holds for any $u$. However, we can plug in $u(x) = 1$ and, after some calculation, find that $\mu'$ integrates to $$\int_{\mathcal{X}} \left(\nabla u(x) \cdot (x - \pi(c,x)) - u(x) \right) f(x)\,dx = \int_{\mathcal{X}} (0 - 1)f(x)\,dx  = \int d\mu' = -1.$$
It will be more convenient to construct a measure $\mu$ that integrates to zero, so that when we split $\mu = \mu^+ - \mu^-$, we face a balanced optimal transport problem between $\mu^+$ and $\mu^-$.
We can take advantage of the fact that for our problem, feasible $u$ have $u(c) = 0$, and let $\mu(A) = \mu'(A) + \mathbb{I}_A(c)$, observing $\int u\,d\mu = \int u\,d\mu'$ for such feasible $u$. Our final measure is thus
\begin{equation*}
\begin{aligned}
        \mu(A) &= \int_{\partial X} \mathbb{I}_A(x) f(x)(x - \pi(c, x))\cdot \hat{n}\,dx \\
        &- \int_{\mathcal{X}} \mathbb{I}_A(x) (\nabla f(x) \cdot (x - \pi(c,x)) 
         + (n+1 - \divv \pi(c,x))f(x))\,dx + \mathbb{I}_A(c).
        \end{aligned}
        \tag{\ref{eq:transformedrepeat}}
\end{equation*}

In conclusion,
\begin{equation*}
    \int_{\mathcal{X}} \left(\nabla u(x) \cdot (x - \pi(c,x)) - u(x) \right) f(x)\,dx = \int u\,d\mu = \int u\,d\mu^+ - \int u\,d\mu^-.
\end{equation*}
for any feasible $u$.
\end{proof}

\begin{proof}[Proof of \cref{lemma:twofunctions}]
First, we rewrite the dual problem in an unconstrained form, proceeding in essentially the same manner as all prior work.
\begin{equation*}
    \inf_{\gamma: \gamma_1 \succeq \mu^+, \gamma_2 \preceq \mu^-} \int \lVert x - y \rVert_1\,d\gamma(x,y) = \inf_\gamma \Theta^*(\gamma) - \Xi^*(\gamma),
\end{equation*}
where
\begin{equation*}
\Theta^*(\gamma) = 
    \begin{cases}
        \int \lVert x - y \rVert_1 \,d\gamma(x, y)  & \text{if } \gamma \in \text{Radon}_+(\mathcal{X} \times \mathcal{X}) \\
        +\infty & \text{, otherwise.}
    \end{cases}
\end{equation*}
and
\begin{equation*}
\Xi^*(\gamma) = \begin{cases}
    0 & \text{if } \gamma_1 \succeq \mu^+, \mu^- \succeq \gamma_2 \\
    -\infty & \text{, otherwise.}
\end{cases}.
\end{equation*}

We will apply Fenchel-Rockafellar duality~\cite{Villani2003Topics}, which states that $\sup_f \Xi(f) - \Theta(f) = \inf_\gamma \Theta^*(\gamma) - \Xi^*(\gamma)$ for concave $\Xi$, and convex and continuous $\Theta$.  The $*$  operation is the (convex or concave) conjugate.
By directly taking the conjugate of $\Theta^*$, we can find
\begin{equation*}
    \Theta^{**}(f) = \begin{cases}
        0 & \text{if } f(x, y) \leq \lVert x - y \rVert_1 \\
        +\infty & \text{, otherwise.}
    \end{cases}
\end{equation*}

This function is proper and convex, and closed because the domain of feasible $f$ is a convex set. Therefore it is legitimate to refer to $\Theta = \Theta^{**}$.
We  propose $\Xi^{**} = \Xi$ as the closed and concave function,
\begin{equation*}
    \Xi^{**}(f) = \begin{cases}
    \int \phi d\,\mu^+ - \int \psi \,d\mu^- & \exists \phi, \psi \in \mathcal{U^\circ}: \phi(x) - \psi(y) = f(x,y) \\
    -\infty & \text{, otherwise,} 
    \end{cases}
\end{equation*}
where $\mathcal{U}^\circ$ is the cone generated by positive scaling of elements of $\mathcal{U}$. This function is proper and concave, and is closed  because the domain of feasible $f$, $\{f \mid \exists \phi, \psi \in \mathcal{U}^\circ : f(x, y) = \phi(x) - \psi(y)\}$ is a closed convex set.
Taking the conjugate again, $\Xi^{***}$ is in fact $\Xi^*$:
\begin{align*}
    \Xi^{***}(\gamma) =& \inf_f \int f(x,y)\,d\gamma(x,y) - \Xi^{**}(f) \\
    =& \inf_{\phi, \psi \in \mathcal{U^\circ}} \int \phi(x)\,d\gamma_1(x) -\int \psi(y)\,d\gamma(y) - \int \phi(x)\,d\mu^+ + \int \psi(y)\,d\mu^- \\
    =& \begin{cases} 0 & \gamma_1 \succeq \mu^+, \gamma_2 \succeq \mu^- \\ -\infty & \text{else} \end{cases} \\ %
    =& \Xi^*(\gamma).
\end{align*}

(If $\gamma_1 \succeq \mu^+$ does not hold, this implies there must exist some $\phi$ for which the integrals are negative; and this can be scaled arbitrarily to $-\infty$. Likewise for $\gamma_2, \mu^-$ and $\psi$. This allows the transformation from an unconstrained to constrained problem.)

For the same reasons as in \citet{Kash2016Optimal}, $\Theta$ is continuous at $f(x,y) = -1$, $\Theta(f) = 0$, and $\Xi(f) = - \int\,d\mu^-$, so the preconditions for Fenchel-Rockafellar duality are satisfied.
At this point, we now have a dual of
\begin{equation*}
    \sup_{\phi(x) - \psi(y) \leq \lVert x - y \rVert_1, \phi, \psi \in \mathcal{U}^\circ} \int \phi\,d\mu^+ - \int \psi\,d\mu^- = \sup_f \Xi(f) - \Theta(f) = \inf_\gamma \Theta^*(\gamma) - \Xi^*(\gamma). 
\end{equation*}
\end{proof}

\section{Noise trading (no adverse selection) in 2 dimensions}
\label{app:2dnoise}

\definecolor{ao(english)}{rgb}{0.0, 0.5, 0.0}
\def\ra{((1 + sqrt(2)) / 6)}
\def\rb{(1 / 6)}

We consider in \S\ref{ssec:uniform2d} a pure noise trading case where the mechanism designer has a fixed belief $c = (\frac{1}{2}, \frac{1}{2})$, which corresponds to setting $\lambda=1$ in the linear belief update model. Here we detail the calculations of the transport cost of the optimal transport plan for the dual problem.

\begin{figure}
    \centering
    \includegraphics[width=0.4\textwidth]{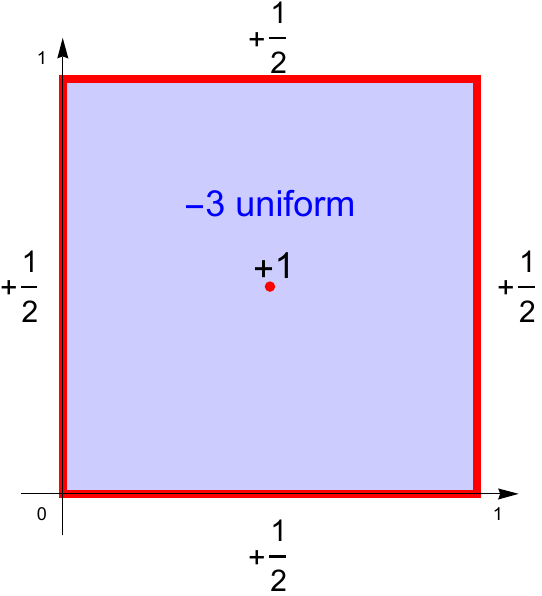}
    \caption{A visualization of the signed measure where $c=(\frac{1}{2}, \frac{1}{2})$, specifically for the noise-trading case where $\lambda=1$.}
    \label{fig:noisetradingsignedmeasure}
\end{figure}
\Cref{fig:noisetradingsignedmeasure} visualizes the signed measure for the optimal transport dual problem for this specific case. There is a point mass of $1$ at $(\frac{1}{2}, \frac{1}{2})$, a uniformly distributed mass of $\frac{1}{2}$ at each of the four edges of the unit square, and a $-3$ mass uniformly distributed over the square. The solution to the dual problem is the minimum transport cost to move the positive mass so that the positive and negative mass cancel each other and the net mass is zero everywhere.

Note that the transport cost here is defined with respect to the Manhattan distance of any mass movement, i.e., measuring the sum of horizontal movement and vertical movement when calculating the transport cost of any movement. As moving positive mass and moving negative mass are symmetric, we can consider the optimal solution in the solution space where only the positive masses are moved to cancel out the uniform negative mass over the unit square.

Based on the learned solution in Fig.~\ref{fig:noiserochet} (top row), we can conjecture the structure of the optimal mechanism and thus the structure of the optimal transport plan, which involves partitioning the type space into separate regions including one octagonal region with zero transport cost, four equivalent rectangular regions, and four equivalent pentagonal regions. 

Fig.~\ref{fig:noise_trading_partition_and_plan}, right, shows the optimal transport for one of the four equivalent rectangles. There is a positive on edge JK, and it is moved upward to uniformly cover the rectangle HIJK. The transport cost of one rectangle is therefore \(C_1^{(R)} = \int_{0}^{b_1} \left(\frac{1-2a_1}{2}\right) \frac{1}{b_1} x \mathrm{d}x = \frac{(1-2a_1)b_1}{4}\).

Fig.~\ref{fig:noise_trading_partition_and_plan}, middle, shows the optimal transport for one of the four equivalent pentagons. There is a positive mass on edges AB and BD. The positive mass at edge AB is moved downward to uniformly cover rectangle ABCF, the positive mass at edge CD is moved leftward to uniformly cover rectangle CDEG, and the positive mass at edge BC is moved to cover uniform triangle EFG. The transport cost for rectangle ABCF is
\begin{equation*}
    \int_{0}^{b_1} \left(\frac{a_1}{2}\right) \frac{1}{b_1} x \mathrm{d}x = \frac{a_1 b_1}{4}.
\end{equation*}
The transport cost for rectangle CDEG is
\begin{equation*}
    \int_{0}^{b_1} \left(\frac{a_1 - b_1}{2}\right) \frac{1}{b_1} x \mathrm{d}x = \frac{(a_1 - b_1) b_1}{4}.
\end{equation*}
For the triangle EFG, the positive mass at edge BC is first moved to point C, then moved to point G, and finally moved to uniformly cover triangle EFG. This is the optimal way to transport the mass as the cost is with respect to the Manhattan distance of movements. Adding the costs of these three transport steps together, the transport for triangle EFG is
\begin{equation*}
    \left(\int_{0}^{b_1} \frac{1}{2} x \mathrm{d}x \right) + \left(\frac{b_1}{2} b_1\right) + \left(\int_{0}^{a_1 - b_1} \int_{0}^{a_1-b_1-x} \left(\frac{b_1}{2}\right) \left(\frac{2}{(a_1 - b_1)^2}\right)  (x+y) \mathrm{d}y \mathrm{d}x\right) = \frac{3b_1^2}{4} + \frac{(a_1 - b_1) b_1}{3}.
\end{equation*}
The total transport cost of the pentagon is 
\begin{equation*}
    C_1^{(P)} = \frac{a_1 b_1}{4} + \frac{(a_1 - b_1) b_1}{4} + \frac{3b_1^2}{4} + \frac{(a_1 - b_1) b_1}{3} = \frac{(5a_1 + b_1) b_1}{6}.
\end{equation*}

\section{Linear belief update in 2 dimensions}
\label{app:linearbelief}

Here, we show the full derivation for a more general case where the mechanism designer has an initial belief of $c$ and updates their belief by linearly interpolating after one trade, so that $\pi(c, x) = \lambda c + (1 - \lambda) x$, $\lambda \in [0, 1]$.
We give a family of transport plans parameterized by $\lambda$, which match a family of mechanisms parameterized by $\lambda$.

\subsection{Constructing the transformed measure}
The transformed measure (Eq.~\ref{eq:mechobjectiverepeat}) for this case is:
\begin{itemize}
    \item $-(2\lambda + 1)$ mass distributed uniformly.
    \item A $+1$ point mass at $c$.
    \item Mass of $+(1-c_1)\lambda$ on the right boundary, $+(1-c_2)\lambda$ on the top boundary, $+ c_1\lambda$ on the left boundary, and $+c_2\lambda$ on the bottom boundary.
\end{itemize}
The positive and negative masses each have magnitude $2\lambda + 1$. The measure for the case where $c=(\frac{1}{2}, \frac{1}{2})$ is visualized in Fig. \ref{fig:adversemeasure} (right).

\subsection{Conjecture for optimal mechanism}
Consider the case where $c=(\frac{1}{2}, \frac{1}{2})$ . We first train a model to maximize profit. Fig.~\ref{fig:noiserochet} visualizes the learned mechanisms when varying the value of $\lambda$.

Similar to the noise trading case, we use the learned solution, as well as our knowledge of the problem setting, to conjecture the structure of the optimal mechanism. The optimal mechanism should be deterministic, offer 8 menu items in addition to the no-trade option, and should be symmetric.

\subsection{Constructing a dual certificate}

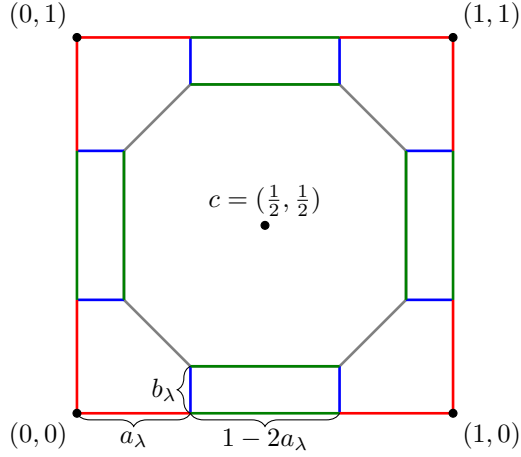
\begin{figure}
\centering
\begin{tikzpicture}[scale=5]

\coordinate (V) at (0.5, 0.5);

\coordinate (A) at (0,1);
\coordinate (B) at (1,1);
\coordinate (C) at (1,0);
\coordinate (D) at (0,0);

\coordinate (E) at ({\ral}, {1 - \rbl});
\coordinate (F) at ({1 - \ral}, {1 - \rbl});
\coordinate (G) at ({1 - \rbl}, {1 - \ral});
\coordinate (H) at ({1 - \rbl}, {\ral});
\coordinate (I) at ({1 - \ral}, {\rbl});
\coordinate (J) at ({\ral}, {\rbl});
\coordinate (K) at ({\rbl}, {\ral});
\coordinate (L) at ({\rbl}, {1 - \ral});

\coordinate (M) at ({\ral}, {1});
\coordinate (N) at ({1 - \ral}, {1});
\coordinate (O) at ({1}, {1 - \ral});
\coordinate (P) at ({1}, {\ral});
\coordinate (Q) at ({1 - \ral}, {0});
\coordinate (R) at ({\ral}, {0});
\coordinate (S) at ({0}, {\ral});
\coordinate (T) at ({0}, {1 - \ral});

\draw (A) -- (B) -- (C) -- (D) -- cycle;
\draw [line width=1pt, gray] (E) -- (F) -- (G) -- (H) -- (I) -- (J) -- (K) -- (L) -- cycle;
\draw [line width=1pt, blue]
    (M) -- (E)
    (N) -- (F)
    (O) -- (G)
    (P) -- (H)
    (Q) -- (I)
    (R) -- (J)
    (S) -- (K)
    (T) -- (L);
\draw [line width=1pt, red] (A) -- (M)
    (N) -- (B)
    (B) -- (O)
    (P) -- (C)
    (C) -- (Q)
    (R) -- (D)
    (D) -- (S)
    (T) -- (A);
\draw [line width=1pt, ao(english)]
    (M) -- (N)
    (O) -- (P)
    (Q) -- (R)
    (S) -- (T)
    (E) -- (F)
    (G) -- (H)
    (I) -- (J)
    (K) -- (L);

\node [above left = 0pt] at (A) {$(0, 1)$};
\node [above right = 0pt] at (B) {$(1, 1)$};
\node [below right = 0pt] at (C) {$(1, 0)$};
\node [below left = 0pt] at (D) {$(0, 0)$};
\node [above = 0pt] at (V) {$c=(\frac{1}{2}, \frac{1}{2})$};
\filldraw (A) circle[radius=0.3pt];
\filldraw (B) circle[radius=0.3pt];
\filldraw (C) circle[radius=0.3pt];
\filldraw (D) circle[radius=0.3pt];
\filldraw (V) circle[radius=0.3pt];

\draw [decorate,decoration={brace,amplitude=5pt,mirror,raise=0ex}] (D) -- (R) node[midway,yshift=-1em]{$a_{\lambda}$};
\draw [decorate,decoration={brace,amplitude=5pt,raise=0ex}] (R) -- (J) node[midway,xshift=-1em]{$b_{\lambda}$};
\draw [decorate,decoration={brace,amplitude=5pt,mirror,raise=0ex}] (R) -- (Q) node[midway,yshift=-1em]{$1 - 2a_{\lambda}$};

\end{tikzpicture}
\caption{Partition of the unit square that corresponds to the optimal transport solution to the dual problem in the two-item case where the initial belief of the mechanism designer is $c=(\frac{1}{2}, \frac{1}{2})$ and after one trade the belief is going to be linearly updated as $\pi(c, x) = \lambda c + (1 - \lambda) x$, $\lambda \in [0, 1]$. The red edges are of the same length, denoted by $a_{\lambda}$, while the blue edges are of the same length as well, denoted by $b_{\lambda}$. The dark green edges share the same length $1 - 2a_{\lambda}$. The exact partitions demonstrated in this figure are for the case of $\lambda = 0.3$.}
\label{fig:linear_update_partition}
\end{figure}

\paragraph{Partitioning the type space} As in the noise trading case, the solution to the dual problem is the minimum transport cost to move the positive mass and cancel out the negative mass perfectly and the optimal transport plan involves partitioning the type space into separate regions including an octagonal no-trade region in the center, four equivalent rectangular regions, and four equivalent octagonal regions. The minimal total transport cost is the sum of the minimal transport cost for each of the partitioned regions. 

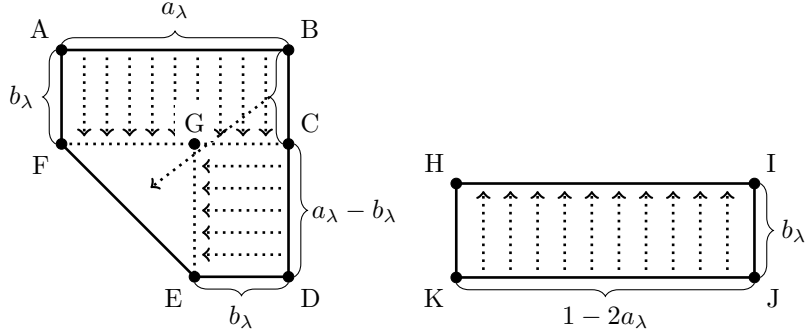
\begin{figure}[t]
\centering
\begin{tikzpicture}[scale=10]

\coordinate (A) at ({0}, {\ral});
\coordinate (B) at ({\ral}, {\ral});
\coordinate (C) at ({\ral}, {\ral - \rbl});
\coordinate (D) at ({\ral}, {0});
\coordinate (E) at ({\ral - \rbl}, {0});
\coordinate (F) at ({0}, {\ral - \rbl});
\coordinate (G) at ({\ral - \rbl}, {\ral - \rbl});

\draw [line width = 1pt] (A) -- (B) -- (D) -- (E) -- (F) -- cycle;
\draw [line width = 1pt, dotted] (F) -- (C);
\draw [line width = 1pt, dotted] (G) -- (E);

\foreach \i in {1,...,9}
{
    \draw [->, dotted, line width = 1pt] ({\ral / 10 * \i}, {\ral - 0.01}) -- ({\ral / 10 * \i}, {\ral - \rbl + 0.01});
}
\foreach \i in {1,...,5}
{
    \draw [->, dotted, line width = 1pt] ({\ral - 0.01}, {(\ral - \rbl) / 6 * \i}) -- ({\ral - \rbl + 0.01}, {(\ral - \rbl) / 6 * \i});
}

\draw [decorate,decoration={brace,amplitude=5pt,mirror,raise=0.5ex}] (B) -- (C) node[midway,yshift=-1em]{};
\draw [->, dotted, line width = 1pt] ({\ral - 0.025}, {\ral - \rbl / 2}) -- ({(\ral - \rbl) / 2 + 0.03}, {(\ral - \rbl) / 2 + 0.03});

\node [above left = 1pt] at (A) {A};
\filldraw (A) circle[radius=0.2pt];
\node [above right = 1pt] at (B) {B};
\filldraw (B) circle[radius=0.2pt];
\node [above right = 1pt] at (C) {C};
\filldraw (C) circle[radius=0.2pt];
\node [below right = 1pt] at (D) {D};
\filldraw (D) circle[radius=0.2pt];
\node [below left = 1pt] at (E) {E};
\filldraw (E) circle[radius=0.2pt];
\node [below left = 1pt] at (F) {F};
\filldraw (F) circle[radius=0.2pt];
\node [above = 1pt, fill=white] at (G) {G};
\filldraw (G) circle[radius=0.2pt];

\draw [decorate,decoration={brace,amplitude=5pt,raise=0.5ex}] (A) -- (B) node[midway,yshift=1.5em]{$a_{\lambda}$};
\draw [decorate,decoration={brace,amplitude=5pt,mirror,raise=0.5ex}] (A) -- (F) node[midway,xshift=-1.5em]{$b_{\lambda}$};
\draw [decorate,decoration={brace,amplitude=5pt,raise=0.5ex}] (C) -- (D) node[midway,xshift=2.5em]{$a_{\lambda} - b_{\lambda}$};
\draw [decorate,decoration={brace,amplitude=5pt,mirror,raise=0.5ex}] (E) -- (D) node[midway,yshift=-1.5em]{$b_{\lambda}$};

\end{tikzpicture}
\begin{tikzpicture}[scale=10]

\coordinate (H) at ({0}, {\rbl});
\coordinate (I) at ({1 - 2 * \ral}, {\rbl});
\coordinate (J) at ({1 - 2 * \ral}, {0});
\coordinate (K) at ({0}, {0});

\draw [line width = 1pt] (H) -- (I) -- (J) -- (K) -- cycle;

\foreach \i in {1,...,10}
{
    \draw [->, dotted, line width = 1pt] ({(1 - 2 * \ral) / 11 * \i}, {0 + 0.01}) -- ({(1 - 2 * \ral) / 11 * \i}, {\rbl - 0.01});
}

\node [above left = 1pt] at (H) {H};
\filldraw (H) circle[radius=0.2pt];
\node [above right = 1pt] at (I) {I};
\filldraw (I) circle[radius=0.2pt];
\node [below right = 1pt] at (J) {J};
\filldraw (J) circle[radius=0.2pt];
\node [below left = 1pt] at (K) {K};
\filldraw (K) circle[radius=0.2pt];

\draw [decorate,decoration={brace,amplitude=5pt,raise=0.5ex}] (J) -- (K) node[midway,yshift=-1.5em]{$1 - 2 a_{\lambda}$};
\draw [decorate,decoration={brace,amplitude=5pt,raise=0.5ex}] (I) -- (J) node[midway,xshift=1.5em]{$b_{\lambda}$};

\end{tikzpicture}
\caption{A visualization of the optimal transport solution for the partitioned pentagon-shaped and rectangle regions in the two-item case, for the case where the initial belief of the mechanism designer is $c=(\frac{1}{2}, \frac{1}{2})$ and the belief is updated by linear interpolating after one trade so that $\pi(c, x) = \lambda c + (1 - \lambda) x$, $\lambda \in [0, 1]$. The exact shapes demonstrated in this figure are for the case of $\lambda = 0.3$.}
\label{fig:linear_update_pentagon_and_rectangle}
\end{figure}

Fig.~\ref{fig:linear_update_partition} shows the optimal partition in this case with a linear belief update. In the optimal transport solution, the $+1$ point mass at $c=(\frac{1}{2}, \frac{1}{2})$, according to our definition of $\succeq$, can be spread outward (away from $c$) to uniformly cover the octagonal region at the center, at zero cost. In addition to the octagon, there are four equivalent rectangles and four equivalent pentagons. Each of the rectangles and pentagons contains a part of the four outer edges of the unit square, and the positive mass on that part of the edge gets distributed uniformly over the rectangle or pentagon to cancel out the negative mass. The total transport cost is therefore
\begin{equation*}
    C_{\lambda} = 4 C_{\lambda}^{(R)} + 4 C_{\lambda}^{(P)},
\end{equation*}
where $C_{\lambda}^{(R)}$ is the transport cost of one rectangle, $C_{\lambda}^{(P)}$ is the transport cost of one pentagon, and the subscript $\lambda$ represents the linear update parameter.

\paragraph{Transport plan} We parameterize the partition in terms of lengths $a_{\lambda}$ and $b_{\lambda}$. Fig.~\ref{fig:linear_update_pentagon_and_rectangle}, right, shows the optimal transport for one of the four equivalent rectangles. There is a positive on edge JK, and it is moved upward to uniformly cover the rectangle HIJK. The transport cost of one rectangle is therefore
\begin{equation*}
C_{\lambda}^{(R)} = \int_{0}^{b_{\lambda}} \frac{\lambda}{2} (1-2a_{\lambda}) \frac{1}{b_{\lambda}} x \mathrm{d}x = \frac{\lambda(1-2a_{\lambda})b_{\lambda}}{4}.
\end{equation*}
Fig.~\ref{fig:linear_update_pentagon_and_rectangle}, left, shows the optimal transport for one of the four equivalent pentagons. There is a positive mass on edges AB and BD. The positive mass at edge AB is moved downward to uniformly cover rectangle ABCF, the positive mass at edge CD is moved leftward to uniformly cover rectangle CDEG, and the positive mass at edge BC is moved to cover uniform triangle EFG. The transport cost for rectangle ABCF is
\begin{equation*}
    \int_{0}^{b_{\lambda}} \left(\frac{\lambda a_{\lambda}}{2}\right) \frac{1}{b_{\lambda}} x \mathrm{d}x = \frac{\lambda a_{\lambda} b_{\lambda}}{4}.
\end{equation*}
The transport cost for rectangle CDEG is
\begin{equation*}
    \int_{0}^{b_{\lambda}} \frac{\lambda}{2} (a_{\lambda} - b_{\lambda}) \frac{1}{b_{\lambda}} x \mathrm{d}x = \frac{\lambda (a_{\lambda} - b_{\lambda}) b_{\lambda}}{4}.
\end{equation*}
For the triangle EFG, the positive mass at edge BC is first moved to point C, then moved to point G, and finally moved to uniformly cover triangle EFG. This is the optimal way to transport the mass as the cost is with respect to the Manhattan distance of movements. Adding the costs of these three transport steps together, the transport for triangle EFG is
\begin{equation*}
    \left(\int_{0}^{b_{\lambda}} \frac{\lambda}{2} x \mathrm{d}x \right) + \left(\frac{\lambda b_{\lambda}}{2} b_{\lambda}\right) + \left(\int_{0}^{a_{\lambda} - b_{\lambda}} \int_{0}^{a_{\lambda}-b_{\lambda}-x} \left(\frac{\lambda b_{\lambda}}{2}\right) \left(\frac{2}{(a_{\lambda} - b_{\lambda})^2}\right)  (x+y) \mathrm{d}y \mathrm{d}x\right) = \frac{3 \lambda b_{\lambda}^2}{4} + \frac{\lambda (a_{\lambda} - b_{\lambda}) b_{\lambda}}{3}.
\end{equation*}
The total transport cost of the pentagon is 
\begin{equation*}
    C_{\lambda}^{(P)} = \frac{\lambda a_{\lambda} b_{\lambda}}{4} + \frac{\lambda (a_{\lambda} - b_{\lambda}) b_{\lambda}}{4} + \frac{3 \lambda b_{\lambda}^2}{4} + \frac{\lambda (a_{\lambda} - b_{\lambda}) b_{\lambda}}{3} = \frac{\lambda (5a_{\lambda} + b_{\lambda}) b_{\lambda}}{6}.
\end{equation*}

\paragraph{Solving for $a_{\lambda}$ and $b_{\lambda}$}
The structure of the transport plan constrains the values of $a_{\lambda}$ and $b_{\lambda}$.
There is a positive mass on edge JK of length $1-2a_{\lambda}$, and as the mass density per unit length of the edge is $\frac{\lambda}{2}$, the mass is $\frac{\lambda (1-2a_{\lambda})}{2}$. This mass is moved upward to be distributed uniformly over the rectangle HIJK and should perfectly cancel out the negative mass in the rectangle. 
The density of the initial negative mass over the rectangle HIJK is $-(2\lambda + 1) (1-2 a_{\lambda}) b_{\lambda}$, and because it should perfectly cancel out the positive mass $\frac{\lambda (1-2a_{\lambda})}{2}$, we have
\begin{align}
    \frac{\lambda (1-2a_{\lambda})}{2} -(2\lambda + 1) (1-2 a_{\lambda}) b_{\lambda} = 0.
\label{eq:linearupdate_ab_eq1}
\end{align}
The positive mass on edge BC is $\frac{\lambda b_\lambda}{2}$, and the negative mass on triangle EFG is $-\frac{(2\lambda + 1)(a_{\lambda}-b_{\lambda})^2}{2}$. As these two masses should perfectly cancel out each other, we have
\begin{equation}
    \frac{\lambda b_\lambda}{2} - \frac{(2\lambda + 1)(a_{\lambda}-b_{\lambda})^2}{2} = 0.
\label{eq:linearupdate_ab_eq2}
\end{equation}
Similarly, the positive mass on edge AB is $\frac{\lambda a_{\lambda}}{2}$, and the negative mass on rectangle ABCF is $-(2\lambda + 1)a_{\lambda}b_{\lambda}$. As these two masses should perfectly cancel out, we have
\begin{equation}
    \frac{\lambda a_{\lambda}}{2} - (2\lambda + 1)a_{\lambda}b_{\lambda} = 0.
\label{eq:linearupdate_ab_eq3}
\end{equation}
Combining equations Eq.~\ref{eq:linearupdate_ab_eq1}, Eq.~\ref{eq:linearupdate_ab_eq2}, and Eq.~\ref{eq:linearupdate_ab_eq3}, the values of $a_{\lambda}$ and $b_{\lambda}$ can be solved:
\begin{equation*}
    a_{\lambda} = \frac{(1 + \sqrt{2}) \lambda}{4 \lambda + 2}, \quad b_{\lambda} = \frac{\lambda }{4 \lambda + 2}.
\end{equation*}

\begin{figure}
    \centering
    \includegraphics[width=0.9\textwidth]{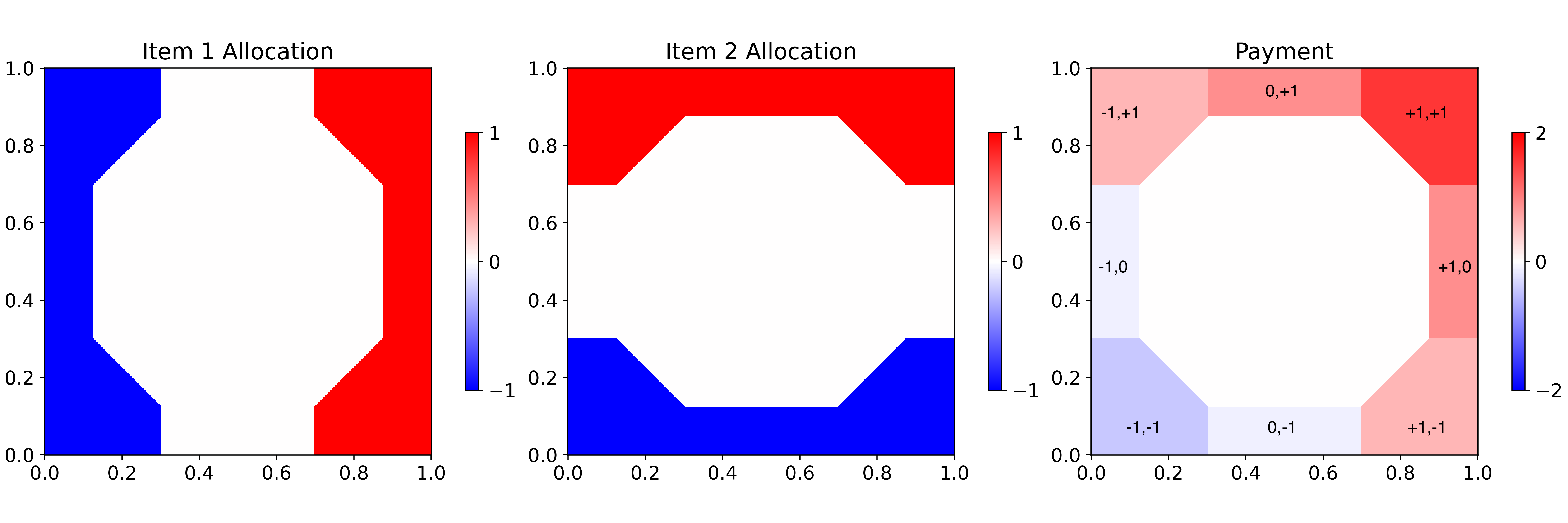}
    \caption{A plot of the allocation and payment rules for the true optimal mechanism under the linear update model ($\pi(c, x) = \lambda c + (1 - \lambda) x$) with $\lambda = \frac{1}{2}$ and $c=(\frac{1}{2}, \frac{1}{2}).$ Each distinct region is associated with a specific menu item; these menu items are marked on the payment rule plots.
    \label{fig:optimallinearupdatemec}}
\end{figure}

\paragraph{Computing the menu}
The boundary between two regions is the point where a trader is indifferent between the two menu items. At any trader type on a boundary, we can easily calculate the welfare of the two neighboring allocations, and infer the price that would make the trader indifferent.
The resulting menu is given in Table~\ref{tab:noisetradinglambda} and visualized (with a specific value of $\lambda = \frac{1}{2}$) in Fig.~\ref{fig:optimallinearupdatemec}. 
The total transport cost for this case with linear belief update is
\begin{equation}
    C_{\lambda} = 4 C_{\lambda}^{(R)} + 4 C_{\lambda}^{(P)} = 4 \cdot \frac{\lambda (1-2a_{\lambda})b_{\lambda}}{4} + 4 \cdot \frac{\lambda (5a_{\lambda} + b_{\lambda}) b_{\lambda}}{6} = \frac{\lambda^2 \left(\left(9+2 \sqrt{2}\right) \lambda +3\right)}{6 (2 \lambda +1)^2}.
\end{equation}
which exactly matches the expected profit of the optimal menu, so our solutions are optimal.

\section{Noise trading in 2 dimensions with an off-center belief}
\label{app:offcenterbelief}

We now consider another pure noise trading case which corresponds to the setting $\lambda = 1$ in the linear update model. Unlike in \S\ref{app:2dnoise}, the mechanism designer in this case has an off-center belief of $c = (\frac{1}{3}, \frac{1}{3})$.

\begin{figure}
    \centering
    \includegraphics[width=0.3\textwidth]{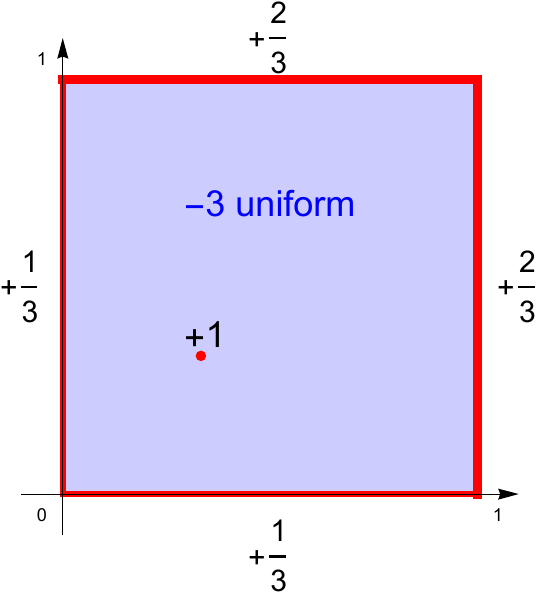}
    \caption{The transformed measure for the two good, adverse selection case on a uniform distribution, for $c=(\frac{1}{3}, \frac{1}{3})$. %
    }
    \label{fig:offcentermeasure}
\end{figure}

\subsection{Constructing the transformed measure}\label{app:offcentermeasure}
The transformed measure (Eq.~\ref{eq:mechobjectiverepeat}) for this case is:
\begin{itemize}
    \item $-(2\lambda + 1) = -3$ mass distributed uniformly.
    \item A $+1$ point mass at $c = (\frac{1}{3}, \frac{1}{3})$.
    \item Mass of $+(1-c_1)\lambda = +\frac{2}{3}$ on the right boundary, $+(1-c_2)\lambda = +\frac{2}{3}$ on the top boundary, $+ c_1\lambda = +\frac{1}{3}$ on the left boundary, and $+c_2\lambda = +\frac{1}{3}$ on the bottom boundary.
\end{itemize}
The positive and negative masses each have magnitude $3$. The measure is visualized in Fig~\ref{fig:offcentermeasure}.

\subsection{Conjecture for optimal mechanism}
We first train RochetNet to maximize profit in this case. The allocation rules for the mechanism after training are shown in Fig. \ref{fig:onethirdrochet}.

Similar to \S\ref{app:2dnoise} and \S\ref{app:linearbelief}, we use the RochetNet solution, as well as our knowledge of the problem setting, to conjecture the structure of the optimal mechanism. The optimal mechanism should be deterministic and should be symmetric with respect to the two items. The RochetNet solution suggests that the optimal mechanism offers 6 menu items in addition to the no-trade option.

\begin{figure}
\centering
\begin{tikzpicture}[scale=5]

\coordinate (V) at ({(1 / 3)}, {(1 / 3)});

\coordinate (A) at (0,1);
\coordinate (B) at (1,1);
\coordinate (C) at (1,0);
\coordinate (D) at (0,0);

\coordinate (E) at ({\rdo}, {1 - \reo});
\coordinate (F) at ({1 - \reo}, {\rdo});
\coordinate (G) at ({1 - \rfo}, {\rbo});
\coordinate (H) at ({\rao}, {\rbo});
\coordinate (I) at ({\rbo}, {\rao});
\coordinate (J) at ({\rbo}, {1 - \rfo});

\coordinate (K) at ({\rdo}, {1});
\coordinate (L) at ({1}, {\rdo});
\coordinate (M) at ({1 - \rfo}, {0});
\coordinate (N) at ({\rao}, {0});
\coordinate (O) at ({0}, {\rao});
\coordinate (P) at ({0}, {1 - \rfo});
\coordinate (II) at ({({1/63})*(33 - 18*sqrt(2) + 4*sqrt(74+22*sqrt(2)))}, {1/3});

\draw (A) -- (B) -- (C) -- (D) -- cycle;
\draw [line width=1pt, gray] (E) -- (F) -- (G) -- (H) -- (I) -- (J) -- cycle;

\draw [line width=1pt, blue]
    (M) -- (G)
    (N) -- (H)
    (O) -- (I)
    (P) -- (J);

\draw [line width=1pt, red]
    (K) -- (E)
    (L) -- (F);

\draw [line width=1pt, ao(english)]
    (D) -- (N)
    (D) -- (O);

\draw [line width=1pt, orange]
    (N) -- (M)
    (O) -- (P)
    (H) -- (G)
    (I) -- (J);

\draw [line width=1pt, brown]
    (L) -- (C)
    (K) -- (A);

\draw [line width=1pt, violet]
    (B) -- (L)
    (B) -- (K);

\draw [line width=1pt, green]
    (M) -- (C)
    (P) -- (A);

\draw [line width=1pt, cyan]
    (J) -- (E)
    (G) -- (F);

\draw [line width=1pt, yellow]
    (E) -- (F);

\node [above left = 0pt] at (A) {$(0, 1)$};
\node [above right = 0pt] at (B) {$(1, 1)$};
\node [below right = 0pt] at (C) {$(1, 0)$};
\node [below left = 0pt] at (D) {$(0, 0)$};
\node [above = 0pt] at (V) {$c=(\frac{1}{3}, \frac{1}{3})$};
\filldraw (A) circle[radius=0.3pt];
\filldraw (B) circle[radius=0.3pt];
\filldraw (C) circle[radius=0.3pt];
\filldraw (D) circle[radius=0.3pt];
\filldraw (V) circle[radius=0.3pt];

\node [left = 2pt] at (F) {$X$};
\node [above = 2pt] at (G) {$W$};
\filldraw (F) circle[radius=0.3pt];
\filldraw (G) circle[radius=0.3pt];

\draw [decorate,decoration={brace,amplitude=5pt,mirror,raise=0ex}] (D) -- (N) node[midway,yshift=-1em]{$a$};
\draw [decorate,decoration={brace,amplitude=5pt,mirror,raise=0ex}] (N) -- (M) node[midway,yshift=-1em]{$1 - a - f$};
\draw [decorate,decoration={brace,amplitude=5pt,mirror,raise=0ex}] (H) -- (N) node[midway,xshift=-1em]{$b$};
\draw [decorate,decoration={brace,amplitude=5pt,mirror,raise=0ex}] (C) -- (L) node[midway,xshift=1em]{$d$};
\draw [decorate,decoration={brace,amplitude=5pt,mirror,raise=0ex}] (L) -- (B) node[midway,xshift=1.6em]{$1 - d$};
\draw [decorate,decoration={brace,amplitude=5pt,mirror,raise=0ex}] (F) -- (L) node[midway,yshift=-1em]{$e$};
\draw [decorate,decoration={brace,amplitude=5pt,mirror,raise=0ex}] (M) -- (C) node[midway,yshift=-1em]{$f$};

\end{tikzpicture}
\caption{Partition of the unit square that corresponds to the optimal transport solution to the dual problem in the two-item pure noise trading case where the initial belief of the mechanism designer is $c=(\frac{1}{3}, \frac{1}{3})$. Edges of the same length are marked with the same color.}
\label{fig:off_center_partition}
\end{figure}
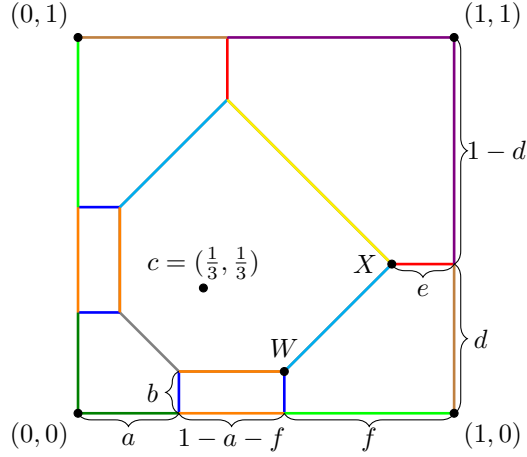

\paragraph{Partitioning the type space} The dual problem is the optimal transport according to the transformed measure in \S\ref{app:offcentermeasure} and the optimal transport plan involves partitioning the type space into separate regions. Based on the RochetNet solution, we conjecture the partitioned regions in Fig.~\ref{fig:off_center_partition}, which consist of a central hexagonal no-trade region, two rectangular regions, and four pentagonal regions.

\begin{figure}[t]
\centering
\begin{tikzpicture}[scale=10]

\coordinate (A) at ({0}, {\rao});
\coordinate (B) at ({\rao}, {\rao});
\coordinate (C) at ({\rao}, {\rao - \rbo});
\coordinate (D) at ({\rao}, {0});
\coordinate (E) at ({\rao - \rbo}, {0});
\coordinate (F) at ({0}, {\rao - \rbo});
\coordinate (G) at ({\rao - \rbo}, {\rao - \rbo});

\draw [line width = 1pt] (A) -- (B) -- (D) -- (E) -- (F) -- cycle;
\draw [line width = 1pt, dotted] (F) -- (C);
\draw [line width = 1pt, dotted] (G) -- (E);

\node [above left = 1pt] at (A) {A};
\filldraw (A) circle[radius=0.2pt];
\node [above right = 1pt] at (B) {B};
\filldraw (B) circle[radius=0.2pt];
\node [above right = 1pt] at (C) {C};
\filldraw (C) circle[radius=0.2pt];
\node [below right = 1pt] at (D) {D};
\filldraw (D) circle[radius=0.2pt];
\node [below left = 1pt] at (E) {E};
\filldraw (E) circle[radius=0.2pt];
\node [below left = 1pt] at (F) {F};
\filldraw (F) circle[radius=0.2pt];
\node [above = 1pt, fill=white] at (G) {G};
\filldraw (G) circle[radius=0.2pt];

\draw [decorate,decoration={brace,amplitude=5pt,raise=0.5ex}] (A) -- (B) node[midway,yshift=1.5em]{$a$};
\draw [decorate,decoration={brace,amplitude=5pt,mirror,raise=0.5ex}] (A) -- (F) node[midway,xshift=-1.5em]{$b$};
\draw [decorate,decoration={brace,amplitude=5pt,raise=0.5ex}] (C) -- (D) node[midway,xshift=2.5em]{$a - b$};
\draw [decorate,decoration={brace,amplitude=5pt,mirror,raise=0.5ex}] (E) -- (D) node[midway,yshift=-1.5em]{$b$};

\end{tikzpicture}
\begin{tikzpicture}[scale=10]

\coordinate (H) at ({0}, {\rbo});
\coordinate (I) at ({1 - \rao - \rfo}, {\rbo});
\coordinate (J) at ({1 - \rao - \rfo}, {0});
\coordinate (K) at ({0}, {0});

\draw [line width = 1pt] (H) -- (I) -- (J) -- (K) -- cycle;

\node [above left = 1pt] at (H) {H};
\filldraw (H) circle[radius=0.2pt];
\node [above right = 1pt] at (I) {I};
\filldraw (I) circle[radius=0.2pt];
\node [below right = 1pt] at (J) {J};
\filldraw (J) circle[radius=0.2pt];
\node [below left = 1pt] at (K) {K};
\filldraw (K) circle[radius=0.2pt];

\draw [decorate,decoration={brace,amplitude=5pt,raise=0.5ex}] (J) -- (K) node[midway,yshift=-1.5em]{$1 - a - f$};
\draw [decorate,decoration={brace,amplitude=5pt,raise=0.5ex}] (I) -- (J) node[midway,xshift=1.5em]{$b$};

\end{tikzpicture}
\begin{tikzpicture}[scale=10]

\coordinate (L) at ({0}, {1 - \rdo});
\coordinate (M) at ({1 - \rdo}, {1 - \rdo});
\coordinate (N) at ({1 - \rdo}, {1 - \rdo - \reo});
\coordinate (O) at ({1 - \rdo}, {0});
\coordinate (P) at ({1 - \rdo - \reo}, {0});
\coordinate (Q) at ({0}, {1 - \rdo - \reo});
\coordinate (R) at ({1 - \rdo - \reo}, {1 - \rdo - \reo});

\draw [line width = 1pt] (L) -- (M) -- (O) -- (P) -- (Q) -- cycle;
\draw [line width = 1pt, dotted] (Q) -- (N);
\draw [line width = 1pt, dotted] (R) -- (P);

\node [above left = 1pt] at (L) {L};
\filldraw (L) circle[radius=0.2pt];
\node [above right = 1pt] at (M) {M};
\filldraw (M) circle[radius=0.2pt];
\node [above right = 1pt] at (N) {N};
\filldraw (N) circle[radius=0.2pt];
\node [below right = 1pt] at (O) {O};
\filldraw (O) circle[radius=0.2pt];
\node [below left = 1pt] at (P) {P};
\filldraw (P) circle[radius=0.2pt];
\node [below left = 1pt] at (Q) {Q};
\filldraw (Q) circle[radius=0.2pt];
\node [above = 1pt, fill=white] at (R) {R};
\filldraw (R) circle[radius=0.2pt];

\draw [decorate,decoration={brace,amplitude=5pt,raise=0.5ex}] (L) -- (M) node[midway,yshift=1.5em]{$1 - d$};
\draw [decorate,decoration={brace,amplitude=5pt,mirror,raise=0.5ex}] (L) -- (Q) node[midway,xshift=-1.5em]{$e$};
\draw [decorate,decoration={brace,amplitude=5pt,raise=0.5ex}] (N) -- (O) node[midway,xshift=2.5em]{$1 - d - e$};
\draw [decorate,decoration={brace,amplitude=5pt,mirror,raise=0.5ex}] (P) -- (O) node[midway,yshift=-1.5em]{$e$};

\end{tikzpicture}
\begin{tikzpicture}[scale=10]

\coordinate (S) at ({0}, {\rfo});
\coordinate (T) at ({\rdo}, {\rfo});
\coordinate (U) at ({\rdo}, {\rfo - \reo});
\coordinate (V) at ({\rdo}, {0});
\coordinate (W) at ({\rdo - \rbo}, {0});
\coordinate (X) at ({0}, {\rfo - \reo});
\coordinate (Y) at ({\rdo - \rbo}, {\rfo - \reo});

\draw [line width = 1pt] (S) -- (T) -- (V) -- (W) -- (X) -- cycle;
\draw [line width = 1pt, dotted] (X) -- (U);
\draw [line width = 1pt, dotted] (Y) -- (W);

\node [above left = 1pt] at (S) {S};
\filldraw (S) circle[radius=0.2pt];
\node [above right = 1pt] at (T) {T};
\filldraw (T) circle[radius=0.2pt];
\node [above right = 1pt] at (U) {U};
\filldraw (U) circle[radius=0.2pt];
\node [below right = 1pt] at (V) {V};
\filldraw (V) circle[radius=0.2pt];
\node [below left = 1pt] at (W) {W};
\filldraw (W) circle[radius=0.2pt];
\node [below left = 1pt] at (X) {X};
\filldraw (X) circle[radius=0.2pt];
\node [above = 1pt, fill=white] at (Y) {Y};
\filldraw (Y) circle[radius=0.2pt];

\draw [decorate,decoration={brace,amplitude=5pt,raise=0.5ex}] (S) -- (T) node[midway,yshift=1.5em]{$d$};
\draw [decorate,decoration={brace,amplitude=5pt,mirror,raise=0.5ex}] (S) -- (X) node[midway,xshift=-1.5em]{$e$};
\draw [decorate,decoration={brace,amplitude=5pt,raise=0.5ex}] (U) -- (V) node[midway,xshift=2.5em]{$f - e$};
\draw [decorate,decoration={brace,amplitude=5pt,mirror,raise=0.5ex}] (W) -- (V) node[midway,yshift=-1.5em]{$b$};

\end{tikzpicture}

\caption{A visualization of the the partitioned pentagonal and rectangular regions in the conjectured optimal mechanism for the two-item pure noise trading case where the belief of the mechanism designer is $c=(\frac{1}{3}, \frac{1}{3})$.}
\label{fig:off_center_pentagon_and_rectangle}
\end{figure}
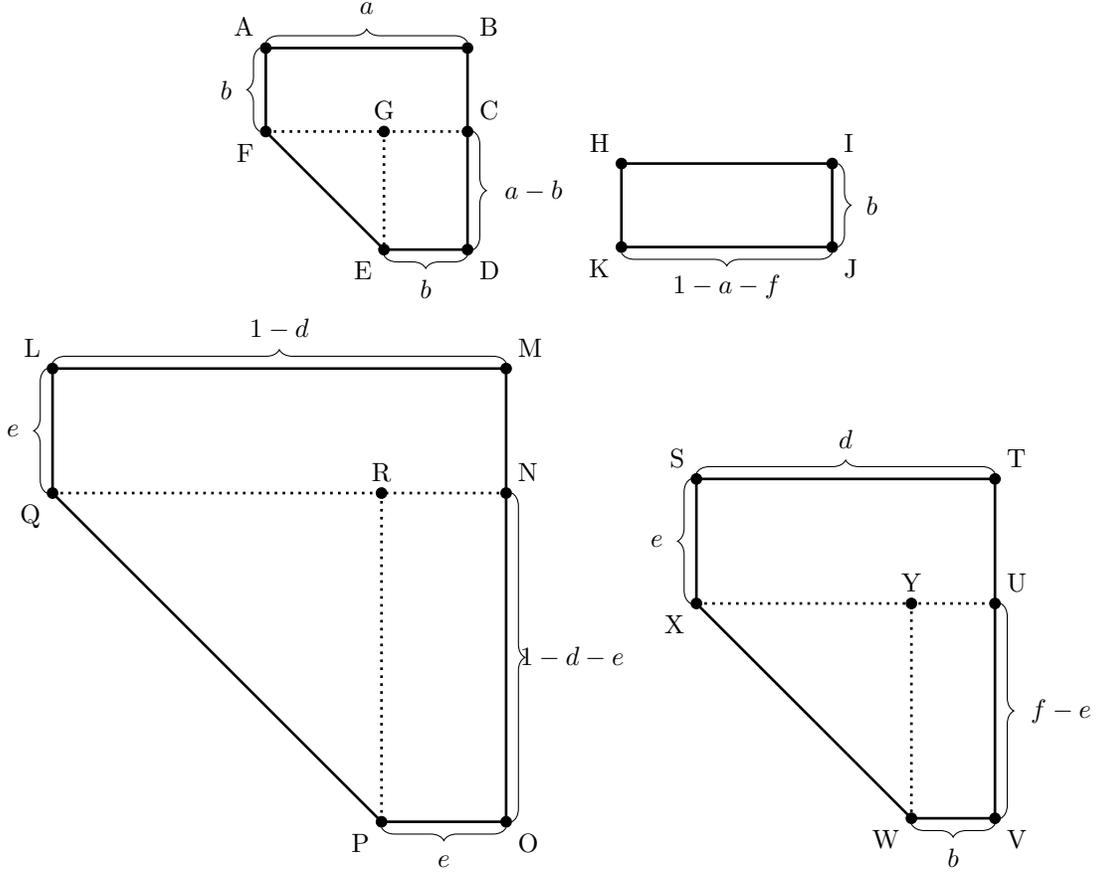

In the optimal transport solution under our conjecture, the $+1$ point mass at $c=(\frac{1}{3}, \frac{1}{3})$, according to our definition of $\succeq$, can be spread outward (away from $c$) to uniformly cover the hexagonal region at the center, at zero cost. In addition to the hexagon, there are two equivalent rectangular regions, and we denote this shape with $R$. There are two equivalent pentagonal regions at top left and bottom right, which we denote with $P_1$. We denote the smaller pentagonal region at bottom left with $P_2$, and the larger pentagonal region at top right with $P_3$. Each of the rectangles and pentagons contains a part of the four outer edges of the unit square, and the positive mass on that part of the edge gets distributed uniformly over the rectangle or pentagon to cancel out the negative mass.

\paragraph{Solving for $a$, $b$, $d$, $e$, and $f$}
The structure of the transport plan constrains the values of $a$, $b$, $d$, $e$, and $f$.
For the rectangular shape $R$ illustrated with rectangle HIJK, there is a positive mass of $\frac{1 - a - f}{3}$ on edge JK. This mass is moved upward to be distributed uniformly over the rectangle HIJK and should perfectly cancel out the negative mass of $-3b(1-a-f)$ in the rectangle. Therefore, we have
\begin{align}
    \frac{1 - a - f}{3} - 3b(1-a-f) = 0.
\label{eq:off_center_ab_eq1}
\end{align}
For the pentagonal shape $P_1$ illustrated with pentagon STVWX, there is a positive mass of $\frac{2d}{3}$ on edge ST and a positive mass of $\frac{f}{3}$ on edge TV. They should cancel out the negative mass of $-3\left(df - \frac{(d-b)(f-e)}{2}\right)$ perfectly. That is,
\begin{align}
    \frac{2d}{3} + \frac{f}{3} - 3\left(df - \frac{(d-b)(f-e)}{2}\right) = 0.
\label{eq:off_center_ab_eq2}
\end{align}
Similarly, for the pentagonal shape $P_2$, it should hold that
\begin{align}
    \frac{a}{3} + \frac{a}{3} - 3\left(a^2 - \frac{(a-b)^2}{2}\right) = 0.
\label{eq:off_center_ab_eq3}
\end{align}
For the pentagonal shape $P_3$, we have
\begin{align}
    \frac{2(1-d)}{3} + \frac{2(1-d)}{3} - 3\left((1-d)^2 - \frac{(1-d-e)^2}{2}\right) = 0.
\label{eq:off_center_ab_eq4}
\end{align}
Consider the point $X = (1 - e, d)$ in Fig.~\ref{fig:off_center_partition}. The fact that this point lies on the boundary between the central no-trade region and the bottom right pentagonal region implies that a trader with the type $(1 - e, d)$ is indifferent to whether making no trade at all, or getting the allocation of $(+1, -1)$ and making the corresponding payment. Assume the payment for the allocation $(+1, -1)$ is $p$. Therefore, it holds that
\begin{align}
    (+1) \cdot (1 - e) + (-1) \cdot d - p = 0.
\label{eq:off_center_point_x}
\end{align}
Similarly, we can get the following equation by considering the point $W = (1 - f, b)$ in Fig.~\ref{fig:off_center_partition}:
\begin{align}
    (+1) \cdot (1 - f) + (-1) \cdot b - p = 0.
\label{eq:off_center_point_w}
\end{align}
Additionally, we have the following constraints
\begin{align}
    0 \le a, b, d, e, f \le 1, \quad e < f, \quad 1 - a - f > 0.
\label{eq:off_center_constraints}
\end{align}
Combining equations Eq.~\ref{eq:off_center_ab_eq1}, Eq.~\ref{eq:off_center_ab_eq2}, Eq.~\ref{eq:off_center_ab_eq3}, Eq.~\ref{eq:off_center_ab_eq4}, Eq.~\ref{eq:off_center_point_x}, Eq.~\ref{eq:off_center_point_w}, and Eq.~\ref{eq:off_center_constraints}, the values of $a$, $b$, $d$, $e$, and $f$ can be solved:
\begin{align}
    & a = \frac{1 + \sqrt{2}}{9}, \quad b = \frac{1}{9}, \quad d = \frac{1}{63} \left(25-6 \sqrt{2}+2 \sqrt{134-82 \sqrt{2}}\right), \\
    & e = \frac{1}{63} \left(26+24 \sqrt{2}-2 \sqrt{310+214 \sqrt{2}}\right), \quad f = \frac{1}{63} \left(44+18 \sqrt{2}-4 \sqrt{74+22 \sqrt{2}}\right).
\end{align}

\begin{figure}
    \centering
    \includegraphics[width=0.9\textwidth]{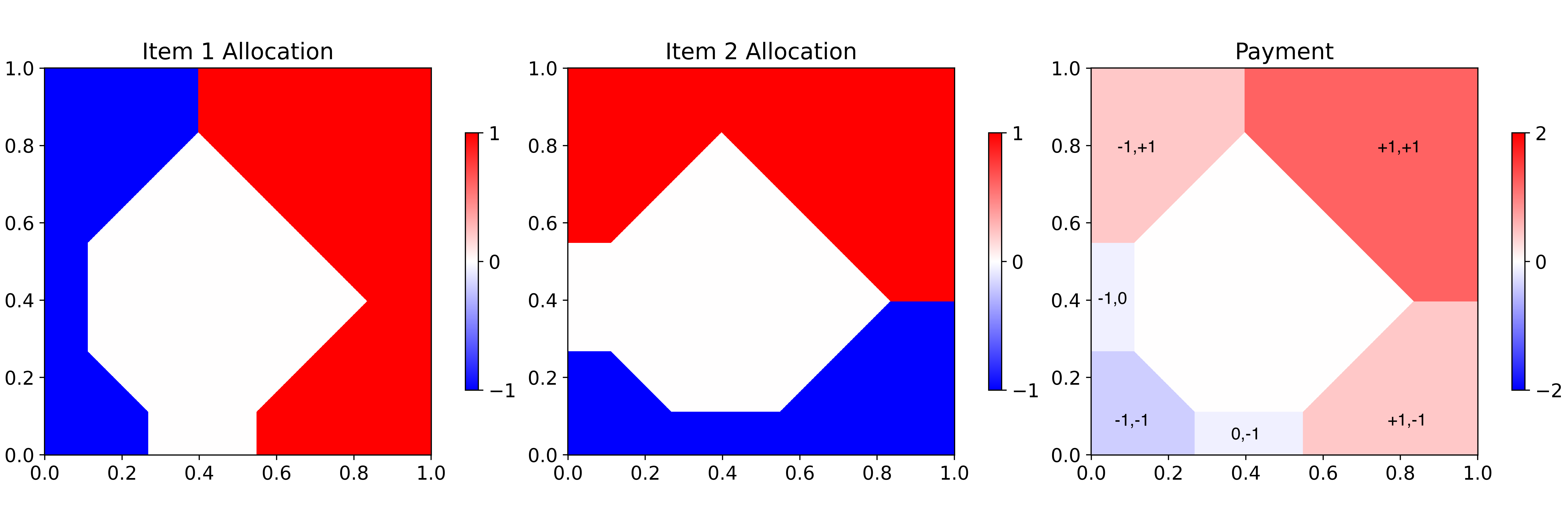}
    \caption{A plot of the allocation and payment rules for the conjectured optimal mechanism under the noise trading model ($\pi(c, x) = c$) for $c=(\frac{1}{3}, \frac{1}{3}).$ Each distinct region is associated with a specific menu item; these menu items are marked on the payment rule plots.
    \label{fig:conjectured_mec_off_center}}
\end{figure}

\paragraph{Computing the menu}
The boundary between two regions is the point where a trader is indifferent between the two menu items. At any trader type on a boundary, we can easily calculate the welfare of any two neighboring allocations, and infer the price that would make the trader indifferent between them.
The resulting menu is given in Table~\ref{tab:noisetradingonethird} and visualized in Fig.~\ref{fig:conjectured_mec_off_center}. 

\end{document}